\newcommand{\F}{\mathbb{F}}
\newcommand{\Z}{\mathbb{Z}}
\newcommand{\N}{\mathbb{N}}
\newcommand{\bpf}{\begin{proof}}
\newcommand{\epf}{\end{proof}}
\definecolor{darkviolet}{rgb}{0.58, 0.0, 0.83}
\newcommand{\rmv}[1]{}
\newcommand{\lub}{\operatorname{lub}}
\newtheorem{theorem}{Theorem}
\newtheorem{lemma}[theorem]{Lemma}
\newtheorem{proposition}[theorem]{Proposition}
\newtheorem{corollary}[theorem]{Corollary}
\newenvironment{example}[1][Example]{\begin{trivlist}
\item[\hskip \labelsep {\bfseries #1}]}{\end{trivlist}}
\newenvironment{remark}[1][Remark]{\begin{trivlist}
\item[\hskip \labelsep {\bfseries #1}]}{\end{trivlist}}
\DeclareMathOperator{\Hull}{Hull}
\DeclareMathOperator{\Supp}{Supp}
\DeclareMathOperator{\Div}{Div}
\DeclareMathOperator{\lmd}{lmd}
\begin{document}
\title{Explicit non-special divisors of small degree, algebraic geometric hulls, and LCD codes from Kummer extensions}

\author{Eduardo Camps Moreno}
\address[Eduardo Camps Moreno]{Department of Mathematics, Virginia Tech\\ Blacksburg, VA USA}
\email{e.camps@vt.edu}

\author{Hiram H. L\'opez}
\address[Hiram H. L\'opez]{Department of Mathematics, Virginia Tech\\ Blacksburg, VA USA}
\email{hhlopez@vt.edu}

\author{Gretchen L. Matthews}
\address[Gretchen L. Matthews]{Department of Mathematics\\ Virginia Tech\\ Blacksburg, VA USA}
\email{gmatthews@vt.edu}
\thanks{Eduardo Camps Moneno was partially supported by a CONACYT scholarship. Hiram H. L\'opez was partially supported by the NSF grants DMS-2201094 and DMS-2401558, and an AMS-Simons Travel Grant. Gretchen L. Matthews was supported by the NSF grants DMS-1855136 and DMS-2037833, and the Commonwealth Cyber Initiative. This work was performed while Eduardo Camps Moreno was at the Instituto Polit\'ecnico Nacional of Mexico and Hiram H. López was at Cleveland State University.}
\keywords{algebraic geometry code, hull, linear complementary dual, maximal function field, non-special divisor, Weierstrass semigroup}
\subjclass[2010]{94B05; 11T71; 14G50}

\maketitle

\begin{abstract}
In this paper, we consider the hull of an algebraic geometry code, meaning the intersection of the code and its dual. We demonstrate how codes whose hulls are algebraic geometry codes may be defined using only rational places of Kummer extensions (and Hermitian function fields in particular). Our primary tool is explicitly constructing non-special divisors of degrees $g$ and $g-1$ on certain families of function fields with many rational places, accomplished by appealing to Weierstrass semigroups. We provide explicit algebraic geometry codes with hulls of specified dimensions, producing along the way linearly complementary dual algebraic geometric codes from the Hermitian function field (among others) using only rational places and an answer to an open question posed by Ballet and Le Brigand for particular function fields. These results complement earlier work by Mesnager, Tang, and Qi that use lower-genus function fields as well as instances using places of a higher degree from Hermitian function fields to construct linearly complementary dual (LCD) codes and that of Carlet, Mesnager, Tang, Qi, and Pellikaan to provide explicit algebraic geometry codes with the LCD property rather than obtaining codes via monomial equivalences. 
 \end{abstract}

\section{Introduction}

Non-special divisors play an important role in the design and many applications of algebraic geometry codes. Given an algebraic function field $F$ of genus $g$ over a finite field $\mathbb{F}$, a divisor $D$ on $F$ is non-special if and only if $\ell(D)=\deg D-g+1$, where $\ell(D)$ denotes the dimension of the Riemann-Roch space of $D$ as an $\F$-vector space. This property is usually guaranteed by requiring that $\deg D\geq 2g-1$. This condition helps to compute the dimension of an algebraic geometry code \cite[Corollary 2.2.3]{St} and to improve the Tsfasman-Vladut-Zink bound (see \cite[Chapter 5]{Adv} for a survey on these improvements). Low-degree non-special divisors have been used to build generalized algebraic geometry codes \cite{NXL}, to compute Weierstrass semigroups in towers of function fields \cite{tower}, and to improve bounds for the bilinear complexity of multiplication algorithms \cite{BB}. 
In 2006, Ballet and Le Brigand~\cite{BB} demonstrated the existence of non-special divisors of degree $g-1$ (resp., of degree $g$) for function fields of genus $g \geq 2$ over a field of size $q \geq 2$ (resp., $q \geq 3$) and noted the difficult task of identifying these divisors. 

In this work, we determine particular non-special divisors of low degree for the Hermitian function field and a more general Kummer extension, addressing the challenge posed in \cite{BB} and motivated by studying the hull of algebraic geometry codes. The hull of a linear code $C$ is $\Hull(C):=C \cap C^{\perp}$. Hulls were first studied formally in the early 1990s by Assmus and Key \cite{Assmus}, though their use may be tied to earlier notions. For instance, a code $C$ is self-orthogonal if and only if $\Hull(C)=C$. The dimensions of hulls govern the complexity of some algorithms of interest in cryptography  \cite{graph_isom}, \cite{Sendrier} as well as properties of some entanglement-assisted quantum error-correcting codes \cite{hull_quantum}. In addition, a code is linear complementary dual (LCD)  if and only if its hull is trivial. LCD codes, whose study was initiated by Massey \cite{Massey} in 1992, simultaneously protect against side-channel and fault injection attacks, as Carlet and Guilley \cite{CG} pointed out. Their connection to the specialty of divisors was observed by Mesnager, Tang, and Qi \cite{MTQ}. 

The determination of non-special divisors of low degrees is mathematically valuable. Additionally, an outgrowth is the construction of algebraic geometry with all of the following nontrivial properties: relying only on the well-understood rational places, the duals and hulls are supported in the same rational places, come from maximal function fields of higher genus than previous works, and utilize a smaller alphabet for certain lengths than prior constructions that also use rational places. Among these codes are LCD codes and other codes with small hulls that are algebraic geometric codes themselves. We distinguish this approach from that of Carlet, Mesnager, Tang, Qi, and Pellikaan \cite{linear}, which produces from a linear code over a finite field $\F_q$ with $q > 3$ elements an equivalent code that is LCD using linear algebraic transformations. The codes in this paper are traditional algebraic geometry codes whose design guarantees that they are LCD. 

In this paper, we construct algebraic geometry codes with hulls of specified dimensions using only rational places on certain maximal function fields, including the Hermitian function field. More generally, we consider Kummer extensions given by 
\begin{equation} \label{Kummer_eq}
\prod_{i=1}^r (y-\alpha_i)=x^m
\end{equation} 
over $\F_q$ with $(q,m)=1$ and $(r,m)=1$. These Kummer extensions allow us to consider several families of function fields of particular interest in coding theory, including the extended norm-trace curve given by 
$$
y^{q^{t-1}}+y^{q^{t-2}}+ \dots + y=x^u 
$$
over $\F_{q^t}$ where $u | \frac{q^{t}-1}{q-1}$. Hermitian function fields, which are maximal over $\F_{q^2}$, are seen by taking $t=2$ and $u=\frac{q^{t}-1}{q-1}$. The first examples of non-classical function fields are obtained when $t=2$ \cite{Schmidt} and are also maximal. The norm-trace function fields, obtained by setting $u=\frac{q^{t}-1}{q-1}$, meet the Geil-Matsumoto bound on the number of rational places over $\F_{q^t}$.

Section~\ref{prelim} shows that we must consider multipoint codes to achieve our goals; this section also provides the background on Weierstrass semigroups, which will be used to address the problem. Our primary tool is the explicit construction of non-special divisors of small degrees, as detailed in Section \ref{explicit_divisors_section}. We build on the work of Ballet and Le Brigand~\cite{BB}, where the authors show the existence of non-special divisors of degree $g-1$ (resp., of degree $g$) for function fields of genus $g \geq 2$ over a field of size $q \geq 2$ (resp., $q \geq 3$). Here, we provide explicit constructions for such divisors on the Kummer extensions given in Eq.~(\ref{Kummer_eq}). In Section \ref{codes_euclidean_section}, we use these non-special divisors to construct a family of codes for which the hull is known explicitly, including, but not at all limited to, some algebraic geometry codes from rational places, which are LCD. Section~\ref{conclusion} presents a conclusion.

\section{Preliminaries} \label{prelim}
We begin this section by setting up the notation and terminology used throughout the paper. We refer the reader to the standard references~\cite{St} and \cite{vL} for more details on algebraic geometry codes. 

Let $F$ be an algebraic function field of genus $g$ over a finite field $\F$ with algebraic closure $\overline{\F}$. The field of rational functions of $F$ is denoted by $\F\left( X \right)$, the set of $K$-rational places of $F$ is written $X(K)$ for an intermediate field $K$ of the extension $\overline{\F}/\F$, and the vector space of Weil differentials is given by $\Omega_X$. Given a divisor $A$, we may write $A=\sum_{P \in X(\overline{\F})} a_P P$ with $a_P \in \Z$ and all but finitely many $a_P=0$; in this case, we say $v_P(A):=a_P$ and $\deg A = \sum_{P \in X(\overline{\F})} v_P(A)$. There is a partial order on $\Div(X)$, the set of divisors of $F$, given by $A \leq B$ if and only if $v_P(A) \leq v_P(B)$ for all $P \in X(\overline{\F})$. The divisor of a function $f \in \F\left( X \right) \setminus \{ 0\}$ is written  as $(f)=\sum_{P \in X(\F)} a_P P$ where $P$ is a zero (resp., pole) of multiplicity $a_P$ (resp., $-a_P$) provided $a_P>0$ (resp., $a_P<0$). The pole divisor of $f \in \F\left( X \right) \setminus \{ 0\}$ is $(f)_{\infty} =\sum_{{\footnotesize{\begin{array}{l} P \in X(\F) \\ v_P(f)<0 \end{array}}}} a_P P$.
A divisor $A$ of $F$ defines a space of functions 
$$\mathcal{L}(A):= \left\{ f \in \F\left( X \right): (f) \geq - A \right\} \cup \{ 0 \}$$ of $F$ as well as a space of differentials 
$$\Omega(A):=\left\{ \omega \in \Omega_X: (\omega) \geq A \right\} \cup \{ 0 \},$$
where $(\omega) \in \Div(X)$ denotes the divisor associated with the differential $\omega$.
The dimension of $\mathcal L(A)$ is given by $\ell \left( A \right)$ and satisfies the Riemann-Roch Theorem, meaning $$\ell \left( A \right)= \deg A +1-g + \ell \left( K-A \right)$$ where $K$ is a canonical divisor of $F$. Moreover, if $\deg A \geq 2g-1$, then 
\begin{equation} \label{RR_high_deg}
\ell \left( A \right)= \deg A +1-g.
\end{equation} We will use the fact that $$A \leq B \Rightarrow \mathcal{L}(A) \subseteq \mathcal{L}(B).$$ The support of the divisor $A$ is  $\Supp(A) := \left\{ P \in X \left( \F \right) : v_P \left( A \right) \neq 0 \right\}$, and we say that $A$ is supported by $P \in  X \left( \F \right)$ if and only if $v_P \left( A \right) \neq 0$. A divisor $A$ is linearly equivalent to $B$, denoted $A \sim B$, if and only if there exists $f \in \F(X)$ such that $A-B=(f)$. 
In the setting where we consider a function field $X: f(y)=g(x)$,  we use $P_{ab}$ to denote a place of $F$ corresponding to $x=a$ and $y=b$. If $F$ has a unique place at infinity, we note it by $P_{\infty}$.

We use the standard notation from coding theory. Because this paper only considers linear codes, we use the term code to mean linear code. An $[n,k,d]$ code over a finite field $\F$ is a code of length $n$, dimension $k$, and minimum distance $d$ (taken with respect to the Hamming metric). The set of indices of codewords of a code of length $n$ is $[n]:=\left\{ 1, \dots, n \right\}$. Given ${v} \in \F^n$, we denote its $i^{th}$ component by $v_i$ where $i \in [n]$. The dual of an $[n,k,d]$ code $C$ is  $$C^{\perp}:= \left\{ w \in \F^n: w \cdot c=0 \ \forall c \in C \right\};$$ that is, the dual is taken with respect to the usual (or Euclidean) dot product. \rmv{For a code $C$ over $\F_{q^2}$, we may also consider its Hermitian dual $$C^{\perp_h}:=\left\{ w \in \F_{q^2}^n: w \cdot_h c = 0 \ \forall c \in C \right\}$$ where
$x \cdot_h y:=\sum_{i=1}^n x_i y_i^q$ is the Hermitian inner product of words $x, y \in \F_{q^2}^n$.} The set of $m \times n$ matrices over a field $\F$ is denoted $\F^{m \times n}$. We write $\N$ to mean the set of nonnegative integers and  $\Z^+$ for the set of positive integers. At times, we make use of the partial order $\leq$ on $\N^m$ given by $v \leq w$ if and only if $v_i \leq w_i$ for all $i \in [m]$; we write $v \nleq w$ to mean there exists $i \in [m]$ with $v_i > w_i$. 
 
Suppose $G$ and $D := Q_1 + \dots +Q_n$ are  divisors of $F$ defined over $\F$, 
where $Q_1, \dots , Q_n$ are distinct $\F$-rational places of $F$, each not belonging to the support of $G$.  
The {\it algebraic geometry code} $C(D,G)$ is defined by
$$
C(D,G) := \{ ev(f) : f \in \mathcal L(G) \},
$$
where
$$
ev(f):=(f(Q_1), f(Q_2),\dotsc, f(Q_n)).
$$
Certainly, $n$ is the length of the code. For convenience, we suppose $n > \deg G$, so the evaluation map $ev: \mathcal L(G) \rightarrow \F^n$ is injective. 
Then, the code $C(D, G)$  has
dimension $k=\ell(G)$ and minimum Hamming distance $d \geq n-\deg G$.
If the support of $G$ consists of $m$ places, then $C(D, G)$ is called an $m$-point code and said to be a {\it multipoint code} provided $m \geq 2$.  

According to \cite[Proposition 2.2.8]{St}, the dual of $C(D,G)^{\perp}$ is 
$$
C(D,G)^{\perp}= \{ \left( \omega_{Q_1}(1), \dots, \omega_{Q_n}(1) \right): \omega \in \Omega(G-D) \}.$$ It is well-known that if there is a differential $\eta \in \Omega_X$, which has simple poles at the places in the support of $D$, then 
\begin{equation} \label{dual_gen}
C(D,G)^{\perp} = \left\{ \left( \eta_{Q_1}(1) f(Q_1), \eta_{Q_2}(1)f(Q_2),\dotsc, \eta_{Q_n}(1)f(Q_n)\right) : f \in \mathcal L(D-G+(\eta)) \right\},
\end{equation}
meaning that the dual of the algebraic geometry code is a generalized algebraic geometry code. This leads to the following result (cf. \cite[Proposition 2.2.10]{St}).  

\begin{lemma}  \label{dual_AG}
Consider the algebraic geometry code $C(D, G)$ where $D=Q_1+\dots+Q_n$ as above.
    \begin{itemize}
\item[\rm (1)] If there exists $\eta \in \Omega_X$ such that $v_{Q_i}(\eta) \geq -1$ and $\eta_{Q_i}(1)=1$ for all $i \in [n]$, then $$C(D,G)^{\perp}=C(D,D-G+(\eta)).$$
\item[\rm (2)]  If, in addition, $G=\sum_{i=1}^m a_i P_i$ so that $C(D,G)$ is an $m$-point code and $\Supp (\eta) \subseteq \Supp (G) \cup \Supp (D)$, then $C(D,G)^{\perp}= C(D,\sum_{i=1}^m a_i' P_i)$ for some $a_i' \in \Z$.
\end{itemize}
\end{lemma}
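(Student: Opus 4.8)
The plan is to reduce part (1) to the description of the dual as a generalized algebraic geometry code given in Eq.~(\ref{dual_gen}). First I would note that the two hypotheses on $\eta$ in (1) are more rigid than they appear: since $\eta_{Q_i}(1)$ is the residue of $\eta$ at $Q_i$ and any differential that is regular at $Q_i$ has residue $0$ there, the condition $\eta_{Q_i}(1)=1\neq 0$ forces $v_{Q_i}(\eta)<0$; combined with $v_{Q_i}(\eta)\geq -1$ this gives $v_{Q_i}(\eta)=-1$ for every $i\in[n]$. Hence $\eta$ has a simple pole with residue $1$ at each $Q_i$, so Eq.~(\ref{dual_gen}) applies and each scalar $\eta_{Q_i}(1)$ appearing there equals $1$; therefore
$$
C(D,G)^{\perp}=\left\{\left(f(Q_1),\dots,f(Q_n)\right):f\in\mathcal L\!\left(D-G+(\eta)\right)\right\}.
$$
To read this off as $C(D,D-G+(\eta))$ I would only need to check that $H:=D-G+(\eta)$ is admissible for the construction, i.e.\ that no $Q_i$ lies in $\supp H$: the $Q_i$ are distinct so $v_{Q_i}(D)=1$, by hypothesis $v_{Q_i}(G)=0$, and we showed $v_{Q_i}(\eta)=-1$, whence $v_{Q_i}(H)=0$. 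This proves (1), and the first deduction — extracting $v_{Q_i}(\eta)=-1$, which is exactly what lets residues be replaced by evaluations — is the only step that needs real care.

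If one prefers not to invoke Eq.~(\ref{dual_gen}), the same conclusion follows directly from $C(D,G)^{\perp}=\{(\omega_{Q_1}(1),\dots,\omega_{Q_n}(1)):\omega\in\Omega(G-D)\}$. Dividing by $\eta$ gives an $\F$-linear isomorphism from $\Omega(G-D)$ onto $\mathcal L(D-G+(\eta))$, since $(\omega/\eta)\geq -(D-G+(\eta))$ if and only if $(\omega)\geq G-D$. For $g=\omega/\eta$ we have $v_{Q_i}(g)\geq -v_{Q_i}(H)=0$, so $g$ is regular at $Q_i$ and $\mathrm{res}_{Q_i}(g\eta)=g(Q_i)\,\mathrm{res}_{Q_i}(\eta)=g(Q_i)$; that is, $\omega_{Q_i}(1)=g(Q_i)$. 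Thus the vectors $(\omega_{Q_i}(1))_{i\in[n]}$ range exactly over $\{(g(Q_1),\dots,g(Q_n)):g\in\mathcal L(H)\}=C(D,H)$.

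For (2), I would write $G=\sum_{i=1}^m a_iP_i$ and keep $H=D-G+(\eta)$. Then $\supp H\subseteq\supp D\cup\supp G\cup\supp(\eta)$, and the extra hypothesis $\supp(\eta)\subseteq\supp G\cup\supp D$ reduces this to $\supp H\subseteq\supp D\cup\{P_1,\dots,P_m\}$. By the computation in part (1), $v_Q(H)=0$ for every $Q\in\supp D$, so in fact $\supp H\subseteq\{P_1,\dots,P_m\}$. Setting $a_i':=v_{P_i}(H)\in\Z$ (and noting $v_{P_i}(D)=0$ because $P_i\in\supp G$ is disjoint from $\{Q_1,\dots,Q_n\}$, so $a_i'=v_{P_i}(\eta)-a_i$) gives $H=\sum_{i=1}^m a_i'P_i$, and part (1) yields $C(D,G)^{\perp}=C(D,\sum_{i=1}^m a_i'P_i)$, an $m$-point code. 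The whole argument is bookkeeping with divisor supports once Eq.~(\ref{dual_gen}) is in hand; the only delicate point is the rigidity observation that opens part (1).
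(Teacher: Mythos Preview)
Your argument is correct. The key observation that $\eta_{Q_i}(1)=1$ forces $v_{Q_i}(\eta)=-1$ is exactly what makes Eq.~(\ref{dual_gen}) collapse to an ordinary evaluation code, and your verification that $v_{Q_i}(D-G+(\eta))=0$ legitimizes writing the result as $C(D,D-G+(\eta))$. The support bookkeeping in part (2) is likewise sound.

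As for comparison: the paper does not actually prove this lemma. It is stated with a parenthetical reference to \cite[Proposition 2.2.10]{St} and then used immediately. So there is no in-paper proof to compare against; your write-up supplies the details that the paper outsources to Stichtenoth. Your argument is essentially the standard one (and your alternative via the isomorphism $\Omega(G-D)\to\mathcal L(D-G+(\eta))$, $\omega\mapsto\omega/\eta$, is precisely how the cited proposition is proved), so nothing here departs from the intended route.
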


Consider a one-point code $C(D, \alpha P)$ on a function field $F$ satisfying the hypotheses of Lemma~\ref{dual_AG}~(2). Then, the dual of the one-point code $C(D, \alpha P)$ of $F$ is also a one-point code, say $C(D, \alpha' P)$. In this case,  
\begin{equation} \label{1pt}
\begin{array}{lll}
C(D, \alpha P) \cap C(D, \alpha P)^{\perp} &=& C(D, \alpha P) \cap C(D, \alpha' P) \\
& =& C(D, \min \left\{ \alpha, \alpha' \right\} P),
\end{array}
\end{equation}
so the hull is well-understood. This intersection is trivial if and only if $\alpha<0$ or $\alpha'<0$, meaning there are no nontrivial LCD one-point codes of $F$.  For the multipoint case, the hull is not as simple, and in fact, we have that
$$C\left(D,\sum\min\{a_P,b_P\} P\right)\subseteq C\left(D,\sum a_PP\right)\cap C\left(D,\sum b_PP\right),$$
which implies that, in some cases, we need further information to get a complete hull description. Hence, it would be desirable to define divisors supported on a set $S$ of rational points, such that the corresponding algebraic geometry code is also of the same type and the parameters of such divisors can fully describe the hull. For the property on the support of the divisors, Lemma \ref{dual_AG}~(2) provides us with a sufficient condition. Many maximal function fields, including the Hermitian function field, satisfy the conditions of Lemma~\ref{dual_AG}~(2).

To define the desired codes, we focus on explicitly constructing non-special divisors. Recall that the {\it index of specialty} of a divisor $A$ is $$i(A):=\ell(A) - \left( \deg(A)+1-g \right).$$ A divisor $A$ is called {\it non-special} if and only if $i(A)=0$, meaning $\ell(A)=\deg(A)+1-g$, and special otherwise. Mesnager, Tang, and Qi \cite{MTQ} used non-special divisors to construct LCD codes. Based on their contribution and motivated by Eq.~(\ref{1pt}), we define two useful operators on divisors. Given divisors $G$ and $H$ on an algebraic function field $F$ over a finite field $\F$, define their {\it greatest common divisor} as 
$$
\gcd(G,H):= \sum_{P \in X \left(\overline{\F}\right)} \min \left\{ v_P \left( G \right), v_P \left( H \right) \right\} P
$$
and their {\it least multiple divisor} as 
$$
\lmd(G,H):= \sum_{P \in X \left(\overline{\F}\right)} \max \left\{ v_P \left( G \right), v_P \left( H \right) \right\} P.
$$
It is immediate that $$\mathcal L(G) \cap \mathcal L(H) = \mathcal L(\gcd(G,H)),$$ 
$$f \in \mathcal L(G) \text{ and } h \in 
\mathcal L(H) \Rightarrow  f-h \in \mathcal L(\lmd(G,H)),$$
and $$G+H=\gcd(G,H)+\lmd(G,H).$$ 
Using these ideas, we prove the following result, which is strongly inspired by \cite[Theorem 4]{MTQ}. We will later combine it with other tools to produce codes using only rational places on the Hermitian function field and some of its relatives, with the property that their hull is fully described as an algebraic geometry code, obtaining some LCD codes as a consequence.

\begin{theorem} \label{thm_2_10_25}
    Let $G$, $H$, and $D=\sum_{i=1}^n Q_i$ be divisors on a function field $F$ such that $Q_i\neq Q_j$ for $i\neq j$ and $G$ and $H$ each have disjoint support from $D.$ Assume that $G+H-D$ is a canonical divisor such that
    $$C(D,G)^\perp=C(D,H).$$
    If $\gcd(G,H)$ is non-special, then
    $$\Hull(C(D,G))=C(D,\gcd(G,H)).$$
\end{theorem}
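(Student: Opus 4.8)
The plan is to establish the equality $\mathrm{Hull}(C(D,G)) = C(D,\gcd(G,H))$ by proving the two inclusions separately, exploiting the identity $G+H = \gcd(G,H) + \lmd(G,H)$ together with the hypothesis that $K := G+H-D$ is canonical. The inclusion $C(D,\gcd(G,H)) \subseteq \mathrm{Hull}(C(D,G))$ is the easy direction: since $\mathcal{L}(\gcd(G,H)) = \mathcal{L}(G) \cap \mathcal{L}(H)$, any $f$ in this space has $ev(f) \in C(D,G)$ and $ev(f) \in C(D,H) = C(D,G)^{\perp}$, hence $ev(f) \in \mathrm{Hull}(C(D,G))$, which gives this containment at the level of codewords.

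For the reverse inclusion I would take an arbitrary $c \in \mathrm{Hull}(C(D,G)) = C(D,G) \cap C(D,H)$ and write $c = ev(f)$ with $f \in \mathcal{L}(G)$ and $c = ev(h)$ with $h \in \mathcal{L}(H)$. Then $ev(f-h) = 0$, so $f-h$ vanishes at each $Q_i$; since also $f - h \in \mathcal{L}(\lmd(G,H))$ and $\lmd(G,H)$ has support disjoint from $D$, this yields $f - h \in \mathcal{L}(\lmd(G,H) - D)$. The crux is the bookkeeping identity
$$\lmd(G,H) - D = \bigl(G + H - \gcd(G,H)\bigr) - D = K - \gcd(G,H).$$
Because $\gcd(G,H)$ is non-special, its index of specialty is zero, and since $i(A) = \ell(K-A)$ by the Riemann--Roch Theorem, we get $\ell(K - \gcd(G,H)) = 0$, i.e. $\mathcal{L}(K - \gcd(G,H)) = \{0\}$. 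Hence $f = h$, so $f \in \mathcal{L}(G) \cap \mathcal{L}(H) = \mathcal{L}(\gcd(G,H))$, and therefore $c = ev(f) \in C(D,\gcd(G,H))$.

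The hypothesis $2g - 2 < \deg G < n$ keeps the configuration non-degenerate: it forces $\deg H = n + (2g-2) - \deg G$ to satisfy $2g-2 < \deg H < n$ as well, so $G$ and $H$ are automatically non-special, $ev$ is injective on $\mathcal{L}(G)$, $\mathcal{L}(H)$, and $\mathcal{L}(\gcd(G,H))$, and the argument can alternatively be packaged as the dimension count $\dim \mathrm{Hull}(C(D,G)) = \ell(\gcd(G,H)) = \dim C(D,\gcd(G,H))$ once the codeword inclusions are in hand. I expect the only real obstacle to be spotting that the ``vanishing on $D$'' constraint cuts $\mathcal{L}(\lmd(G,H))$ down to $\mathcal{L}(K-\gcd(G,H))$, and recognizing that ``$\gcd(G,H)$ non-special'' is precisely the assertion that this last space is trivial; once these observations are made the proof is essentially immediate.
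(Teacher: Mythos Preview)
Your proof is correct and follows essentially the same approach as the paper: establish the easy inclusion via $\mathcal{L}(\gcd(G,H)) = \mathcal{L}(G)\cap\mathcal{L}(H)$, and for the reverse inclusion use that $f-h \in \mathcal{L}(\lmd(G,H)-D)$ together with the key observation $\lmd(G,H)-D = K - \gcd(G,H)$, so that non-speciality of $\gcd(G,H)$ forces $f=h$. Your write-up is in fact a bit more explicit than the paper's in spelling out the identity $\lmd(G,H)-D = K-\gcd(G,H)$ and in explaining the role of the hypothesis $2g-2<\deg G<n$, but the argument is the same.
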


\begin{proof}
It is clear that $C(D,\gcd(G,H))=ev(\mathcal{L}(G)\cap\mathcal{L}(H))\subseteq C(D,G)\cap C(D,H)$. On the other hand, if $c\in \Hull(C(D,G))$, then there exists $f\in\mathcal{L}(G)$ and $h\in\mathcal{L}(H)$ such that
$$ev(f)=ev(h)=c$$
and then $f-h\in\mathcal{L}(\lmd(G,H)-D)$. Since $G+H-D$ is canonical, $\ell(\lmd(G,H)-D)=i(\gcd(G,H))=0$ since $\gcd(G,H)$ is non-special. Then $f=h$ and $f\in\mathcal{L}(\gcd(G,H))$.
\end{proof}

\begin{remark}
While the proof of Theorem \ref{thm_2_10_25} makes use of the fact that $\gcd(G, H)$ is non-special, this is not a necessary condition to have $\Hull(C(D, G))=C(D,\gcd(G, H))$. Indeed, consider the divisors $G=(q^2-q-2)P_{\infty}$ and $H=q^3P_{\infty}$ on the Hermitian function field $F/\F_{q^2}$, where $D$ is the sum of all rational places of $F$ other than $P_{\infty}$. Then $G+H-D=\left(dx/(x^{q^2}-x) \right)$ and $C(D,G)^{\perp} = C(D,H)$. Therefore, $\Hull(C(D,G))=C(D,(q^2-q-2)P_{\infty}) \cap C(D,q^3P_{\infty}) = C(D,(q^2-q-2)P_{\infty})=C(D,\gcd(G,H))$. However, $\gcd(G,H)$ is special, because $i((q^2-q-2)P_{\infty}))= \frac{q(q-1)}{2} - \left( q^2-q-2 + 1 - \frac{q(q-1)}{2} \right) =1 > 0$. 
\end{remark}

\begin{corollary} \label{LCD_cor}
    Let $G$,$H$, and $D$ divisors as in Theorem~\ref{thm_2_10_25}.
    \begin{enumerate}
        \item The dimension of the hull of $C(D,G)$ is $\ell(\gcd(G,H))$.
        
        \item If $\deg\gcd(G,H)=g-1$, then $C(D,G)$ is LCD.
        
        \item If $G\leq H$, then $C(D,G)$ is self-orthogonal. If equality holds, then the code is self-dual.
    \end{enumerate}
\end{corollary}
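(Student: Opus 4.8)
The plan is to read off each of the three conclusions from the theorem just proven together with the Riemann--Roch Theorem. For part (1), the theorem gives $\mathrm{Hull}(C(D,G)) = C(D,\gcd(G,H))$ directly, so the dimension of the hull equals $\dim C(D,\gcd(G,H))$. I would argue that this evaluation code has dimension exactly $\ell(\gcd(G,H))$: the evaluation map $ev$ restricted to $\mathcal{L}(\gcd(G,H))$ has kernel consisting of functions in $\mathcal{L}(\gcd(G,H) - D)$, and since $\gcd(G,H) \leq G$ with $\deg G < n = \deg D$, we get $\deg(\gcd(G,H) - D) < 0$, forcing $\ell(\gcd(G,H)-D)=0$. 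Hence $ev$ is injective on $\mathcal{L}(\gcd(G,H))$ and the dimension is $\ell(\gcd(G,H))$ as claimed.

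For part (2), I would combine part (1) with the non-speciality hypothesis on $\gcd(G,H)$. Since $\gcd(G,H)$ is non-special, $\ell(\gcd(G,H)) = \deg \gcd(G,H) + 1 - g$. If $\deg \gcd(G,H) = g-1$, this gives $\ell(\gcd(G,H)) = (g-1) + 1 - g = 0$, so by part (1) the hull has dimension $0$, i.e. $\mathrm{Hull}(C(D,G))$ is trivial and $C(D,G)$ is LCD.

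For part (3), if $G \leq H$ then $\gcd(G,H) = G$, so $\mathrm{Hull}(C(D,G)) = C(D,\gcd(G,H)) = C(D,G)$; since the hull is $C(D,G) \cap C(D,G)^\perp$, this says $C(D,G) \subseteq C(D,G)^\perp$, i.e. $C(D,G)$ is self-orthogonal. If moreover $G = H$, then $C(D,G)^\perp = C(D,H) = C(D,G)$, so the code is self-dual. I should check that the hypotheses of the theorem are consistent with $G = H$: then $2G - D = G + H - D$ is canonical, which is exactly the condition making $C(D,G)$ self-dual, so there is no hidden contradiction.

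I do not anticipate a serious obstacle here; the only point requiring a little care is the dimension count in part (1), namely verifying that $ev$ is injective on $\mathcal{L}(\gcd(G,H))$ so that $\dim C(D,\gcd(G,H)) = \ell(\gcd(G,H))$ rather than merely $\leq \ell(\gcd(G,H))$ — but this follows from $\deg \gcd(G,H) \leq \deg G < n$ as above. Everything else is a direct substitution into the Riemann--Roch formula.
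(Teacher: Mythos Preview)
Your proof is correct. The paper states this corollary without proof, treating it as an immediate consequence of the preceding theorem and Riemann--Roch; your argument supplies exactly those routine details (injectivity of $ev$ on $\mathcal{L}(\gcd(G,H))$ via $\deg\gcd(G,H)\leq\deg G<n$, the non-speciality computation for part (2), and the identification $\gcd(G,H)=G$ for part (3)).

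One minor remark on part (3): you route the self-orthogonality claim through the theorem, which is fine since $\gcd(G,H)=G$ is automatically non-special when $\deg G>2g-2$. A slightly more direct path, avoiding the theorem entirely, is to note that $G\leq H$ gives $\mathcal{L}(G)\subseteq\mathcal{L}(H)$ and hence $C(D,G)\subseteq C(D,H)=C(D,G)^\perp$ straight from the standing hypothesis $C(D,G)^\perp=C(D,H)$. Either way the conclusion is the same.
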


In light of Corollary \ref{LCD_cor}, we set out to find explicit non-special divisors of degree $g-1$ since, as we will see, this is the lowest degree of a non-special divisor and, correspondingly, describes codes with the smallest hull. A primary tool in our approach is the Weierstrass semigroup. Given $m$ distinct $\F$-rational places $P_1, \dots, P_m$ on a function field $F$, the {\it Weierstrass semigroup} of the $m$-tuple $(P_1, \dots, P_m)$ is
$$
H(P_1, \dots, P_m) := \left\{ {\alpha} \in \N^m :
\exists f \in \F(X) \text{ with } (f)_{\infty}= \sum_{i=1}^{m} \alpha_i P_i \right\}.
$$
Equivalently, $\mathbf{\alpha} \in H(P_1, \dots, P_m)$ if and only if $$\ell \left( \sum_{i=1}^m \alpha_i P_i \right) = \ell \left( \sum_{i=1}^m \alpha_i P_i - P_j \right)+1$$ for all $j \in [m]$.  For short, if the $m$-tuple of places $(P_1, \dots, P_m)$  is clear from the context, we sometimes write $H_m$ to mean $H(P_1, \dots, P_m)$. The Weierstrass gap set of a $t$-tuple of places $(P_1, \dots, P_m)$ is $$G(P_1, \dots, P_m):=\N^m \setminus H(P_1, \dots, P_m).$$ 

Notice that $\alpha \in H(P_1, \dots, P_m) \setminus \{ 0\}$ implies $\ell \left( \sum_{i=1}^m \alpha_i P_i  \right) \geq 2$. Indeed, there exists $j \in [m]$ with $\alpha_j > 0$, and 
$$\ell \left( \sum_{i=1}^m \alpha_i P_i \right) > \ell \left( \sum_{i=1}^m \alpha_i P_i - P_j \right) \geq \ell \left( 0 \right) = 1 $$
since $\sum_{i=1}^m \alpha_i P_i - P_j  \geq 0$. 
The multiplicity of the semigroup $H(P)$, where $P \in X(\F)$, is $$\gamma (H(P)):= \min \left\{ a: a \in H(P) \setminus \{ 0\} \right\}.$$ In the next section, we use features of these semigroups to define non-special divisors of the least possible degree.

\section{Explicit non-special divisors of small degree} \label{explicit_divisors_section}

Recall that  for a divisor $A \in \Div(X)$ with $\deg A \geq 2g-1$, $i(A)=0$ according to Eq.~(\ref{RR_high_deg}). Hence, divisors of degree at least $2g-1$ are non-special.  On the other hand, if $\deg A< g-1$, then $\deg A + 1 -g <0$ and so $\ell(A) \neq  \deg A + 1 -g$. Thus, divisors of degrees less than $g-1$ are special. Consequently, the least possible degree of a non-special divisor is $g-1$. It is also worth noting that if $A \in \Div(X)$ has $\deg A = g-1$, then $A$ non-special implies $A$ is not effective. To see this, notice that if $A \geq 0$, then $\F \subseteq \mathcal{L}(A)$; hence, $\ell(A) \geq 1 \neq 0 =  \deg(A)+1-g$. We conclude that the least possible degree of an effective non-special divisor is $g$. This is captured in Figure \ref{line}. Furthermore, if $A$ is non-special and $A \leq B$, then $B$ is also non-special. This follows from the fact that if $\ell(A)=\deg A + 1 -g$, then $\ell(K-A)=0$ for a canonical divisor $K$ on a function field $F$, which implies $\ell(K-B)=0$ and $\ell(B)=\deg B + 1 -g.$ 
Thus, if we have $C(D,G)^\perp=C(D,H)$ and $A\leq G,H$ for some non-special divisor of degree $g-1$, we will have that $C(D,G)\cap C(D,H)=C(D,\mathrm{gcd}(G,H))$. 

\begin{figure} 
\begin{tikzpicture}[xscale = 1.5]
\draw [thick, <->] (-4.5,0) -- (5.0,0);
\draw (0,-0.2) -- (0,0.2); 
\draw (-1,-0.2) -- (-1,0.2); 
\draw (2.5,-0.2) -- (2.5,0.2); 
\draw (-2,-0.2) -- (-2,0.2); 
\node [above] at (0,0.2) {$g$};
\node [below] at (0.2,-1.65) {least possible degree of};
\node [below] at (0.2,-2.1) {effective non-special $A$};
\node [above] at (-1,0.2) {$g-1$};
\node [above] at (2.5,0.2) {$2g-1$};
\node [above] at (4.5,0.2) {$\deg A$};
\node [above] at (-2,0.2) {$g-2$};
\node [below] at (-1,-.65) {$A$ special if};
\node [below] at (-1,-1.1) {effective};
\node [below] at (-3,-.65) {$A$ special};
\node [below] at (3.7,-.65) {$A$ non-special};
\draw [red,  thick, -stealth] (-2,-0.2) -- ++(0,-0.5) -- ++(-2.5,0);
\draw [blue, thick, -stealth] (2.5,-0.2) -- ++(0,-0.5) -- ++( 2.5,0);
\draw [darkviolet, thick, ->] (-1,-.75) -- (-1,-.25);
\draw [darkviolet, thick, ->] (0,-1.7) -- (0,-.25);
\end{tikzpicture}
\caption{Impact of $\deg A$ on the index of speciality of a divisor $A$ on a function field $F$ of genus $g$} \label{line}
\end{figure}
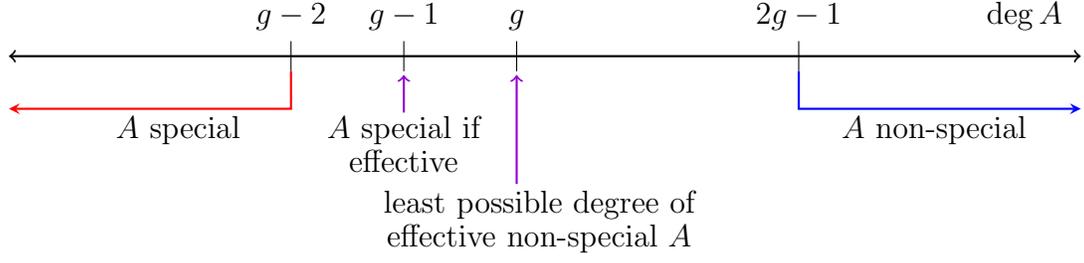

Ballet and Le Brigand \cite{BB} proved the existence of non-special divisors of degree $g-1$ (resp., of degree $g$) for function fields of genus $g \geq 2$ over fields of size $q \geq 2$ (resp., $q \geq 3$). In particular, they show the following.

\begin{lemma} \label{BB_lemma}
Suppose $F$ is a function field of genus $g \geq 1$ over a field $\F$ with $| \F | \geq 4$. 
\begin{enumerate}
\item If $|X(\F)| \geq g$, then there exists a non-special divisor $A \geq 0$ with $\deg A = g$ and $\Supp  (A)  \subseteq X(\F)$.
\item If  $|X(\F)| \geq g+1$, then there exists a non-special divisor $A$ with $\deg A = g-1$ and $\Supp (A)  \subseteq X(\F)$.
\end{enumerate}
\end{lemma}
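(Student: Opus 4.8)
The plan is to deduce (2) from (1) and to prove (1) by a greedy construction that adjoins one rational point at a time while keeping the index of specialty under control via Riemann--Roch. For the reduction, suppose $E \ge 0$ is non-special with $\deg E = g$ and $\supp E \subseteq X(\F)$. Then $\ell(E) = \deg E + 1 - g = 1$, so $\mathcal{L}(E)$ consists only of the constants, and $|\supp E| \le \deg E = g$; hence the hypothesis $|X(\F)| \ge g+1$ of (2) provides a rational point $P \notin \supp E$. Setting $A := E - P$, we get $\deg A = g-1$, $\supp A = \supp E \cup \{P\} \subseteq X(\F)$, and $\mathcal{L}(A) = \{ f \in \mathcal{L}(E) : f(P) = 0 \} = \{0\}$, so $\ell(A) = 0 = \deg A + 1 - g$ and $A$ is non-special. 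Thus it suffices to prove (1), and in this argument that uses only $|X(\F)| \ge g$.

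For (1), I would build divisors $0 = E_0 \le E_1 \le \dots \le E_g$ with $E_{j+1} = E_j + P_{j+1}$ for rational points $P_{j+1}$, maintaining the invariant $\ell(E_j) = 1$; equivalently the index of specialty $i(E_j) = \ell(K - E_j) = g-j$ drops by one at each step. For any point $P$, Riemann--Roch gives $\ell(E_j + P) - \ell(E_j) = 1 + \ell(K - E_j - P) - \ell(K - E_j)$, and since $\ell(K - E_j) = g-j \ge 1$ throughout, $\ell(E_j + P) = \ell(E_j)$ holds exactly when $P$ is not a base point of the complete linear system $|K - E_j|$. The base locus of $|K - E_j|$ has degree at most $\deg(K - E_j) + 1 - \ell(K - E_j) = (2g-2-j) + 1 - (g-j) = g-1$ — this is the standard bound arising from $\ell(F) \le \deg F + 1$ for $F$ linearly equivalent to an effective divisor — so there are at most $g-1$ base points. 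Since $|X(\F)| \ge g$, at each step there is a rational point $P_{j+1}$ that is not a base point of $|K - E_j|$, and the construction runs for $j = 0, 1, \dots, g-1$. The resulting divisor $E := E_g$ is effective of degree $g$ with $\supp E \subseteq X(\F)$ and $\ell(E) = 1$, i.e. $i(E) = 0$: it is non-special, as required. (When $g = 1$ the relevant systems are $|K| \sim |0|$, the bound is vacuous, and any rational point works; note also the $P_{j+1}$ need not be distinct, which is harmless.)

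The one real obstacle is the base-locus estimate together with the guarantee that the at most $g-1$ base points cannot exhaust $X(\F)$ — this is precisely where the lower bounds on $|X(\F)|$ enter, and everything else is routine Riemann--Roch bookkeeping. I would also remark that the genuinely sharper statement of Ballet and Le Brigand \cite{BB}, which allows $A$ to be merely defined over $\F$ (with closed points of higher degree in its support) and pushes the field size down to $q \ge 2$ or $q \ge 3$, needs more care; the hypothesis $|\F| \ge 4$ recorded here is inherited from their framework and is more than enough for the elementary counting above, which by itself only requires the stated bounds on the number of rational points.
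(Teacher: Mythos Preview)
The paper does not supply a proof of this lemma; it is simply quoted from Ballet and Le Brigand \cite{BB}. Your argument is correct and self-contained. Your reduction of (2) to (1) by subtracting a rational point outside the support is exactly what the paper later records separately as \cite[Lemma~3]{BB} (Lemma~\ref{lemma.BB}), so on that step you coincide with the cited source. The greedy construction for (1)---adding at each stage a rational point outside the base locus of $|K-E_j|$ and using the bound $\deg(\text{base locus})\le \deg(K-E_j)+1-\ell(K-E_j)=g-1$---is sound; as you correctly observe, it needs only $|X(\F)|\ge g$ and makes no use of the hypothesis $|\F|\ge 4$.
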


In this section, we give explicit constructions of non-special divisors of degrees $g-1$ and $g$ for certain families of function fields. We first focus our attention on those of degree $g$. To determine a non-special divisor $A$ of degree $g$, we seek $A$ with $\ell \left( A \right)=1$, meaning $\mathcal{L}(A)=\F$. 

Notice that if $A$ and $B$ are effective divisors with $A \leq B$ and $v_P(A) \in H(P) \setminus \{ 0 \}$, then $\ell \left( B \right) \geq \ell \left( A \right) \geq 2$.  
Since we seek $A=\sum_{i=1}^m \alpha_i P_i $ with $\ell \left( A \right)=1$, we may restrict our search to divisors $$A=\sum_{i=1}^m \alpha_i P_i  
\textnormal{ with } \alpha_i <  \gamma \left( H(P_i)\right) \ \forall i \in [m].$$ This alone, however, is not enough to guarantee $\ell \left( A \right)=1$ as the following example demonstrates.

\rmv{
\begin{example}
Consider the Hermitian function field $F$ given by $y^8+y=x^9$ over $\F_{64}$. It is well-known that $H(P)=\left< 8, 9 \right>$ for all $\F_{64}$-rational places $P$ of $F$, so $\gamma \left( H(P)\right)=8$ for all $P \in X(\F_{64})$. Furthermore, 
$$(x)=\sum_{{\footnotesize{\begin{array}{l} b \in \F_{64} \\ b^8+b=0 \end{array}}}} P_{0b}-8P_{\infty},$$ and $$(y)=9P_{00}-9P_{\infty}.$$ Consequently, 
$$
\left( \frac{x^2}{y}  \right) = 2 \sum_{{\footnotesize{\begin{array}{l} b \in \F_{64}^* \\ b^8+b=0 \end{array}}}} P_{0b}- 7 P_{00} -7P_{\infty}.
$$
It follows that $ \frac{x^2}{y} \in \mathcal{L}\left( 7P_{00}+7P_{\infty} \right)$ and $\ell \left( 7P_{00}+7P_{\infty} \right) \geq 2$ despite the fact that $$v_{P} \left( 7P_{00}+7P_{\infty}\right)= 7 <\gamma \left( H(P)\right)$$ for $P \in \left\{ P_{00}, P_{\infty} \right\}$. 
\end{example}}

\begin{example}
Consider the Hermitian function field $F$ given by $y^q+y=x^{q+1}$ over $\F_{q^2}$. It is well-known that $H(P)=\left< q, q+1 \right>$ for all $\F_{q^2}$-rational places $P$ of $F$, so $\gamma \left( H(P)\right)=q$ for all $P \in X(\F_{q^2})$. Furthermore, 
$$(x)=\sum_{{\footnotesize{\begin{array}{l} b \in \F_{q^2} \\ b^q+b=0 \end{array}}}} P_{0b}-qP_{\infty},$$ and $$(y)=(q+1)P_{00}-(q+1)P_{\infty}.$$ Consequently, 
$$
\left( \frac{x^2}{y}  \right) = 2 \sum_{{\footnotesize{\begin{array}{l} b \in \F_{q^2}^* \\ b^q+b=0 \end{array}}}} P_{0b}- (q-1) P_{00} -(q-1)P_{\infty}.
$$
It follows that $ \frac{x^2}{y} \in \mathcal{L}\left( (q-1)P_{00}+(q-1)P_{\infty} \right)$ and $\ell \left( (q-1)P_{00}+(q-1)P_{\infty} \right) \geq 2$ despite the fact that $$v_{P} \left( (q-1)P_{00}+(q-1)P_{\infty}\right)= q-1 <\gamma \left( H(P)\right)$$ for $P \in \left\{ P_{00}, P_{\infty} \right\}$. 
\end{example}

As we will see, it is necessary to consider divisors $A=\sum_{i=1}^m \alpha_i P_i $ where $$\beta \in G(P_1, \dots, P_m) \  \forall  \left( \beta_1, \dots, \beta_m \right) \leq \left( \alpha_1, \dots, \alpha_m \right).$$
This naturally leads one to the minimal generating set of the Weierstrass semigroup. For $i \in [m]$, let $\Gamma^+(P_i) := H(P_i)$, and for $\ell \geq 2$, let
\[
\Gamma^+(P_{i_1}, \ldots, P_{i_\ell}) :=
\left\{
v \in {\Z^+}^\ell :
 v \text{ is minimal in }
 \{w \in H(P_{i_1}, \ldots, P_{i_\ell}) : v_i = w_i\}
 \text{ for some }i \in [\ell] 
\right\}.
\]
The \textit{least upper bound} of $v^{(1)}, \ldots, v^{(t)} \in \mathbb{N}^n$ is given by 
\[
\lub(v^{(1)}, \ldots, v^{(t)}) := (\max\{ v^{(1)}_1,\ldots, v^{(t)}_1\},\hdots, \max\{v^{(1)}_n,\ldots,v^{(t)}_n\}) .
\]
For each $I \subseteq [m]$ let $\iota_I$ denote the natural inclusion $\mathbb{N}^{\ell} \to \mathbb{N}^m$ into the coordinates indexed by $I$.
The \textit{minimal generating set} of $H(P_1,\hdots, P_m)$ is 
\[
\Gamma(P_1,\hdots, P_m) 
:= \bigcup_{\ell=1}^m \bigcup_{\substack{I=\{i_1,\hdots,i_\ell\} \\ i_1 < \cdots < i_\ell}} \iota_I(\Gamma^+(P_{i_1},\hdots,P_{i_\ell})).
\]
The Weierstrass semigroup $H(P_1,\hdots, P_m)$ is completely determined by the minimal genera{\-}ting set $\Gamma(P_1,\hdots,P_m)$ \cite[Theorem 7]{Matthews}:
If $1\leq m < |\F|$, then
\begin{equation} \label{hlub}
H(P_1,\hdots,P_m) = \{\lub \{ {{v_1}}, \dots, {{v_m}} \}:  {{v_1}}, \dots, {{v_m}} \in\Gamma(P_1,\hdots,P_m)\}.
\end{equation}
The following observation relates the minimal generating set of the semigroup of the $m$-tuple $(P_1,\hdots, P_m)$ to non-special effective divisors of degree $g$.

\begin{proposition}\label{gns}
Consider  an effective divisor $A=\sum_{i=1}^m \alpha_i P_i \in \Div(X)$ of degree $g$. If ${{\gamma}} \nleq \alpha$ for all  ${{\gamma}} \in \Gamma(P_1, \dots, P_m)$, then $A$ is non-special. 
\end{proposition}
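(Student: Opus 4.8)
The plan is to translate the statement into a claim about the Weierstrass semigroup via Riemann--Roch. Since $\deg A = g$, the index of specialty is $i(A) = \ell(A) - (\deg A + 1 - g) = \ell(A) - 1$, so $A$ is non-special exactly when $\ell(A) = 1$, i.e.\ when $\mathcal{L}(A) = \F$. Because $A \geq 0$ we already have $\F \subseteq \mathcal{L}(A)$, so it suffices to rule out a nonconstant function in $\mathcal{L}(A)$.

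I would argue by contradiction. Suppose $f \in \mathcal{L}(A) \setminus \F$. Since $(f) \geq -A$ and $\supp A \subseteq \{P_1, \dots, P_m\}$, every pole of $f$ sits at some $P_i$ and has order at most $\alpha_i$; writing $(f)_\infty = \sum_{i=1}^m \beta_i P_i$ we get $0 \leq \beta_i \leq \alpha_i$ for all $i$, while $f$ nonconstant forces $(f)_\infty \neq 0$. By definition of the Weierstrass semigroup, $\bm\beta := (\beta_1, \dots, \beta_m) \in H(P_1, \dots, P_m) \setminus \{0\}$ with $\bm\beta \leq \alpha$. The proposition follows as soon as I produce some $\bm\gamma \in \Gamma(P_1, \dots, P_m)$ with $\bm\gamma \leq \bm\beta$, since then $\bm\gamma \leq \bm\beta \leq \alpha$, contradicting the hypothesis.

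So the real content is the claim that every nonzero element of $H(P_1, \dots, P_m)$ dominates (coordinatewise) some element of the minimal generating set. The quick route is to quote Eq.~(\ref{hlub}): write $\bm\beta = \lub\{\mathbf v_1, \dots, \mathbf v_m\}$ with each $\mathbf v_j \in \Gamma(P_1, \dots, P_m)$; since $\bm\beta \neq 0$, some $\mathbf v_j \neq 0$, and $\mathbf v_j \leq \lub\{\mathbf v_1, \dots, \mathbf v_m\} = \bm\beta$, so $\bm\gamma := \mathbf v_j$ works. A self-contained alternative, which also sidesteps the hypothesis $m < |\F|$ attached to Eq.~(\ref{hlub}), is to choose $\bm\delta$ minimal in the finite nonempty poset $\{\mathbf w \in H(P_1,\dots,P_m) \setminus \{0\} : \mathbf w \leq \bm\beta\}$, put $J = \supp \bm\delta$, observe that restricting a function realizing $\bm\delta$ to its pole support shows $\bm\delta|_J \in H(P_j : j \in J)$, and then check from the definition of $\Gamma^+$ that $\bm\delta|_J$ is minimal in $\{\mathbf w \in H(P_j : j \in J) : w_{j_0} = \delta_{j_0}\}$ for some (in fact every) $j_0 \in J$: a strictly smaller such $\mathbf w$, extended by zeros, would be an element of the poset strictly below $\bm\delta$. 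Hence $\bm\delta = \iota_J(\bm\delta|_J) \in \Gamma(P_1,\dots,P_m)$ and $\bm\delta \leq \bm\beta$.

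I expect this last claim to be the only genuine obstacle. The delicate points are that passing from $\bm\delta$ to $\bm\delta|_J$ really does land in the semigroup of the subtuple $(P_j : j \in J)$ --- this is where one uses that shrinking the pole support removes poles rather than creating them --- and that minimality of $\bm\delta$ in the ambient poset is precisely the minimality condition defining membership in $\Gamma^+$. The degenerate case $|J| = 1$, where $\Gamma^+(P_{j_0}) = H(P_{j_0})$ and there is nothing to prove, and the convention that the zero vector is excluded from $\Gamma$ (so the hypothesis ``$\bm\gamma \nleq \alpha$ for all $\bm\gamma \in \Gamma$'' is not vacuous) should also be flagged. The Riemann--Roch reduction and the pole bound on $f$ are routine.
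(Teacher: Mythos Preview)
Your proof is correct and follows essentially the same line as the paper's: argue the contrapositive, use $\ell(A)\geq 2$ to produce a nonzero $\bm\beta\in H(P_1,\dots,P_m)$ with $\bm\beta\leq\alpha$, and then invoke Eq.~(\ref{hlub}) to descend to an element of $\Gamma$ below $\alpha$. Your write-up is in fact more careful than the paper's (you track the nonzero condition explicitly and flag the $m<|\F|$ hypothesis attached to Eq.~(\ref{hlub})), and your self-contained alternative for the descent step is a welcome addition.
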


\begin{proof}
Let $A=\sum_{i=1}^m \alpha_i P_i \in \Div(X)$ where $\alpha_i \geq 0$ for all $i, 1 \leq i \leq m$ and $\sum_{i=1}^m \alpha_i=g$. We prove the contrapositive of the statement. Suppose $A$ is special. Then $\ell(A) > \deg A  + 1 -g =1$. Since $\ell(A) \geq 2$, there exists $w \in H(P_1, \dots, P_m)$ such that $w \leq \alpha $. According to Eq.~(\ref{hlub}), there exist  ${{v_1}}, \dots, {{v_m}}  \in  \Gamma(P_1, \dots, P_m)$ satisfying ${{v_1}} \leq  
\lub \{ {{v_1}}, \dots, {{v_m}} \}=w \leq \alpha$, completing the proof.
\end{proof}

We are interested in Kummer extensions defined by $$\prod_{i=1}^r (y-\alpha_i)=x^m$$ where $\alpha_i\in\mathbb{F}_q$, $(q,m)=1$, and $(r,m)=1$, which is a function field of genus 
$\frac{(m-1)(r-1)}{2}$. They allow us to consider several families of function fields of particular interest in coding theory, including those
defined by  $$y^{q^{t-1}}+y^{q^{t-2}}+ \dots + y=x^u$$ over $\F_{q^t}$
where $u | \frac{q^{t}-1}{q-1}$. This family contains some maximal function fields, such as the Hermitian function fields, seen by taking $t=2$ and $u=\frac{q^{t}-1}{q-1}$, as well as their quotients which are given in the case $t=2$. The norm-trace function fields, obtained by setting $u=\frac{q^{t}-1}{q-1}$, are not maximal unless $t=2$ but do meet the Geil-Matsumoto bound. 

To prove our result, we need an explicit description of $\Gamma^+(P_1,\ldots,P_l)$ where $P_i\in\Supp((x))\setminus P_\infty$. Similar results for the Hermitian and norm-trace function fields were obtained previously at \cite{Matthews} and \cite{Peachey}.

\begin{proposition}[{\cite[Theorem 3.2]{oneandtwo}, \cite[Theorem 10]{kummer}}]\label{theorem.gamma}
Let $F/\mathbb{F}_q(y)$ be the Kummer extension defined by
$$x^m=\prod_{i=1}^r (y-\alpha_i)$$
as above, and let $P_i$ be the place associated with $y-\alpha_i$. Then $$\Gamma^+(P_1)=\mathbb{N}_0\setminus\left\{mk+j\ \left|\ 1\leq j\leq m-1-\left\lfloor\frac{m}{r}\right\rfloor,\ 0\leq k\leq r-2-\left\lfloor\frac{rj}{m}\right\rfloor\right.\right\}$$
and for $2\leq l\leq r-\left\lfloor\frac{r}{m}\right\rfloor$, $\Gamma^+(P_1,\ldots,P_l)$ is given by
$$\left\{(ms_1+j,\ldots,ms_l+j)\ \left|\ 1\leq j\leq m-1-\left\lfloor\frac{m}{r}\right\rfloor,\ s_i\geq 0,\ \sum_{i=1}^l s_i=r-l-\left\lfloor\frac{rj}{m}\right\rfloor\right.\right\}.$$

\end{proposition}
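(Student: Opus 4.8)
The plan is to work directly with the field extension $F/\mathbb{F}_q(y)$, which has degree $m$, so that every $f\in F$ is uniquely $f=\sum_{t=0}^{m-1}c_t(y)x^t$ with $c_t\in\mathbb{F}_q(y)$. First I would record the local picture: since $\prod_i(y-\alpha_i)$ has a simple zero at each place $y=\alpha_i$ of $\mathbb{F}_q(y)$ and a pole of order $r$ at $y=\infty$, and $\gcd(r,m)=1$, the places $y=\alpha_i$ and $y=\infty$ are totally ramified in $F/\mathbb{F}_q(y)$; hence there is a unique place $P_i$ above $y=\alpha_i$ with $v_{P_i}(x)=1$ and $v_{P_i}(y-\alpha_i)=m$, and a unique place $P_\infty$ above $y=\infty$ with $v_{P_\infty}(x)=-r$ and $v_{P_\infty}(y)=-m$. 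Consequently $v_{P_i}(c_tx^t)\equiv t\pmod m$ and $v_{P_\infty}(c_tx^t)\equiv -rt\pmod m$, and because $\gcd(r,m)=1$ these residues are pairwise distinct as $t$ ranges over $\{0,\dots,m-1\}$; therefore the ultrametric inequality is an \emph{equality} at each of $P_1,\dots,P_r,P_\infty$, i.e.\ the valuation of $f$ there equals $\min_t v(c_tx^t)$. This reduces every question about poles of a function at these places to a coordinatewise (in $t$) question about the pole orders and the degree of the rational functions $c_t(y)$.

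For $\Gamma^+(P_1)=H(P_1)$, I would parametrize the functions whose only pole is $P_1$: such a function is $f=p(y)\,x^t/(y-\alpha_1)^k$ with $0\le t\le m-1$ and $p$ a polynomial, $p(\alpha_1)\neq 0$. Its pole at $P_1$ is $mk-t$, and ``no pole at $P_\infty$'' translates (via the equality above) into $k\ge\deg p+\lceil rt/m\rceil$, with no other constraint since $x^t$ contributes only zeros at $P_2,\dots,P_r$. Taking $p$ constant is optimal, so the nongaps are $0$, the multiples of $m$, and the numbers $mk-t$ with $1\le t\le m-1$, $k\ge\lceil rt/m\rceil$. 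Writing a gap as $n=mk+j$ with $j=n\bmod m$, so $t=m-j$ and $\lceil rt/m\rceil=r-\lfloor rj/m\rfloor$ (here $\gcd(r,m)=1$ forces $m\nmid rj$ for $1\le j\le m-1$), one reads off that $n$ is a gap exactly when $0\le k\le r-2-\lfloor rj/m\rfloor$, and the bound $j\le m-1-\lfloor m/r\rfloor$ is precisely what makes this range of $k$ nonempty. (As a check, summing the counts $r-1-\lfloor rj/m\rfloor$ and using $\sum_{j=1}^{m-1}\lfloor rj/m\rfloor=(r-1)(m-1)/2$ recovers the genus $(m-1)(r-1)/2$.)

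For $l\ge 2$, I would first check the easy inclusion: each $v=(ms_1+j,\dots,ms_l+j)$ with $\sum_i s_i=r-l-\lfloor rj/m\rfloor\ge 0$ lies in $H(P_1,\dots,P_l)$, witnessed by the monomial $f=x^{m-j}/\prod_{i=1}^l(y-\alpha_i)^{s_i+1}$, whose pole at $P_i$ is exactly $ms_i+j$ and which has $v_{P_\infty}(f)=rj-m\lfloor rj/m\rfloor>0$. For the reverse, take $v\in\Gamma^+(P_1,\dots,P_l)$ and a function $f$ with $(f)_\infty=\sum_i v_iP_i$. The pole of $f$ at $P_1$ is realized by the single monomial with $t_1=(-v_1)\bmod m$, and by the equality above that monomial $c_{t_1}(y)x^{t_1}$ by itself has pole divisor $w$ with $w_1=v_1$ and $w\le v$; minimality of $v$ gives $w=v$, so $f$ may be taken to be this monomial, and comparing residues mod $m$ forces $v_1,\dots,v_l$ to lie in one class $j:=v_1\bmod m$ with $1\le j\le m-1$ (the class $j=0$ is impossible, since then a unit could be removed from the denominator to strictly lower one coordinate while fixing another, staying in $H$). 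Writing $v_i=ms_i+j$ and $c_{t_1}=h(y)/\prod_i(y-\alpha_i)^{s_i+1}$, the $P_\infty$ condition gives $\deg h\le\sum_i s_i-(r-l-\lfloor rj/m\rfloor)$; and a reduction argument --- replacing $v$ by $v-m\,e_k$ for a coordinate with $s_k\ge 1$, which lands back in $H$ iff $\sum_i s_i>r-l-\lfloor rj/m\rfloor$ --- pins down $\sum_i s_i=r-l-\lfloor rj/m\rfloor$ (forcing in particular $h$ to be a nonzero constant). The range $2\le l\le r-\lfloor r/m\rfloor$ is exactly the range for which some admissible $j$ exists.

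The routine parts are the ramification computation in the first step and the arithmetic that turns ``$n=mk+j$ is a gap'' into the stated inequalities. The part I expect to be the main obstacle is the ``these are all'' direction for $l\ge 2$: one must rule out every reduction of $v$ --- those lowering a coordinate by a multiple of $m$ and those pushing a coordinate into a new residue class mod $m$ --- from landing back in $H(P_1,\dots,P_l)$ while keeping a coordinate fixed, so that $v$ genuinely fails minimality unless $\sum_i s_i=r-l-\lfloor rj/m\rfloor$. This hinges on one clean auxiliary fact, itself proved by the same bookkeeping: a vector with common residue $j$ belongs to $H(P_1,\dots,P_l)$ if and only if the sum of its ``$s$-parts'' is at least $r-l-\lfloor rj/m\rfloor$. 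An equivalent route, avoiding case analysis, is to first derive from the coordinatewise decomposition the closed formula $\ell\big(\sum_i\alpha_iP_i\big)=\sum_{t=0}^{m-1}\max\{0,\,1-\lceil rt/m\rceil+\sum_i\lfloor(\alpha_i+t)/m\rfloor\}$ and then extract both $H(P_1,\dots,P_l)$ and $\Gamma^+(P_1,\dots,P_l)$ from it.
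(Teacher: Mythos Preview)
The paper does not prove this proposition; it is quoted from \cite[Theorem~3.2]{oneandtwo} (the one-point statement) and \cite[Theorem~10]{kummer} (the multi-point statement), so there is no ``paper's own proof'' to compare against beyond the citation. Your approach --- decomposing every $f\in F$ uniquely as $\sum_{t=0}^{m-1}c_t(y)x^t$ and exploiting that the residues of $v_P(c_tx^t)$ modulo $m$ are pairwise distinct at the totally ramified places --- is exactly the standard method used in those references, and the arithmetic you carry out (the translation between gaps and the inequalities on $k$ and $j$, the explicit witness $x^{m-j}/\prod_i(y-\alpha_i)^{s_i+1}$, and the auxiliary criterion ``$(ms_1+j,\dots,ms_l+j)\in H$ iff $\sum s_i\ge r-l-\lfloor rj/m\rfloor$'') is correct.

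Two points deserve to be tightened. First, your strict-ultrametric equality only covers $P_1,\dots,P_r,P_\infty$; you still need that if $f=\sum_t c_tx^t$ is regular at every place above an \emph{unramified} $Q$ of $\mathbb{F}_q(y)$, then each $c_t$ is regular at $Q$. This is exactly the statement that $1,x,\dots,x^{m-1}$ is a local integral basis at such $Q$ (true here because $\prod_i(y-\alpha_i)$ is squarefree and a $Q$-unit), and without it you cannot conclude that the extracted monomial $c_{t_1}x^{t_1}$ has poles only among $P_1,\dots,P_l$. Second, in the minimality direction for $l\ge2$, the reduction ``replace $v$ by $v-me_k$ for some $k$ with $s_k\ge1$'' does not by itself finish the argument: if the coordinate $i_0$ certifying minimality is the \emph{only} index with $s_{i_0}\ge1$, you must instead drop some other coordinate to $0$ (e.g.\ take $w=v-j\,e_k$ for $k\neq i_0$, realized by $x^{m-j}\big/\prod_{i\neq k}(y-\alpha_i)^{s_i+1}$, which is pole-free at $P_\infty$ precisely because $\sum_i s_i\ge r-l-\lfloor rj/m\rfloor+1$). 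You flag this issue in your obstacle paragraph, and the missing case is indeed handled by the same monomial bookkeeping; once these two details are inserted, the proof is complete.
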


Before applying this proposition to the construction of non-special divisors, we state the facts needed to determine their degrees. 

\begin{lemma}\label{lemma.aux1}
Let $r,m\in\mathbb{Z}^+$ be relatively prime and $t$ be an integer such that $0 \leq t < m$ and $t \equiv r\mod{m}$.
 For $1\leq j\leq m-1$,
$$\left\lfloor\frac{r(j+1)}{m}\right\rfloor-\left\lfloor\frac{rj}{m}\right\rfloor=\begin{cases}\left\lfloor\frac{r}{m}\right\rfloor+1 & j=\left\lfloor\frac{km}{t}\right\rfloor,\ 1\leq k\leq t-1\\ \left\lfloor\frac{r}{m}\right\rfloor&\text{otherwise.}\end{cases}$$
Furthermore,    $$\sum_{k=1}^{t-1}\left\lfloor\frac{km}{t}\right\rfloor=\frac{(m-1)(t-1)}{2}.$$
\end{lemma}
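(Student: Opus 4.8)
For part (1), the plan is to work directly with the fractional-part function. Writing $r = \lfloor r/m \rfloor \cdot m + t$ with $0 \le t < m$, one has $rj/m = \lfloor r/m\rfloor \cdot j + tj/m$, so $\lfloor rj/m \rfloor = \lfloor r/m\rfloor \cdot j + \lfloor tj/m \rfloor$. Hence the difference $\lfloor r(j+1)/m\rfloor - \lfloor rj/m\rfloor$ equals $\lfloor r/m \rfloor + \left(\lfloor t(j+1)/m\rfloor - \lfloor tj/m\rfloor\right)$, and it remains to show the bracketed quantity is $1$ precisely when $j = \lfloor km/t \rfloor$ for some $1 \le k \le t-1$ and $0$ otherwise. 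First I would observe that since $0 \le tj/m < t$ for $1 \le j \le m-1$, the partial sums $\lfloor tj/m \rfloor$ start at $0$ (well, at $\lfloor t/m\rfloor$ which is $0$ since $t < m$) and end at $\lfloor t(m-1)/m\rfloor = t-1$, increasing by $0$ or $1$ at each step (because $t/m < 1$, consecutive values of $tj/m$ differ by less than $1$). So there are exactly $t-1$ values of $j$ in $\{1, \dots, m-1\}$ at which a jump of $1$ occurs. The jump at step $j \to j+1$ happens exactly when there is an integer in the half-open interval $(tj/m, t(j+1)/m]$, i.e. when $\lceil tj/m \rceil \le t(j+1)/m$ crosses an integer; equivalently, writing the $k$-th jump location, the jump to reach value $k$ first occurs at the smallest $j$ with $\lfloor tj/m\rfloor = k$, which is $j = \lceil km/t \rceil$. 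Here the coprimality $(r,m)=1$, hence $(t,m)=1$, guarantees $km/t$ is never an integer for $1 \le k \le t-1$, so $\lceil km/t\rceil = \lfloor km/t\rfloor + 1$. A short reindexing identity ($\lceil km/t \rceil = \lfloor km/t\rfloor + 1$ when $t\nmid km$) converts the jump set $\{\lceil km/t\rceil : 1 \le k \le t-1\}$ into the stated set $\{\lfloor km/t\rfloor : 1 \le k \le t-1\}$; the cleanest way is to note that the jump from value $k-1$ to $k$ occurring at $j+1$ means $j = \lfloor km/t\rfloor$ reading off the largest $j$ still giving value $k-1$. I would pin down this last indexing carefully — that is the one genuinely fiddly point.

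For part (2), assuming $t < m$ (so that $(t,m)=1$ with $t \ge 1$), the plan is to evaluate $\sum_{k=1}^{t-1} \lfloor km/t \rfloor$ by the standard lattice-point / Hermite-type argument. Write $km/t = \lfloor km/t\rfloor + \{km/t\}$; since $(t,m)=1$, as $k$ runs over $1, \dots, t-1$ the fractional parts $\{km/t\}$ run over $\{1/t, 2/t, \dots, (t-1)/t\}$ in some order, so $\sum_{k=1}^{t-1}\{km/t\} = \frac{1}{t}\cdot\frac{(t-1)t}{2} = \frac{t-1}{2}$. Therefore
\[
\sum_{k=1}^{t-1}\left\lfloor\frac{km}{t}\right\rfloor = \sum_{k=1}^{t-1}\frac{km}{t} - \frac{t-1}{2} = \frac{m}{t}\cdot\frac{(t-1)t}{2} - \frac{t-1}{2} = \frac{(m-1)(t-1)}{2},
\]
which is exactly the claim.

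The main obstacle is the bookkeeping in part (1): getting the jump locations indexed so that they come out as $\lfloor km/t\rfloor$ rather than $\lceil km/t\rceil$ or $\lfloor km/t\rfloor + 1$, and confirming that the exceptional value $\lfloor r/m\rfloor + 1$ (rather than $\lfloor r/m\rfloor$) is attached to the jump positions rather than the non-jump positions. Everything else — the splitting $\lfloor rj/m\rfloor = \lfloor r/m\rfloor j + \lfloor tj/m\rfloor$, the monotone-step-size observation, and the fractional-part sum in part (2) — is routine once the coprimality hypothesis is invoked to rule out $t \mid km$.
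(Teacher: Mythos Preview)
Your argument for part~(1) is essentially the same as the paper's: both reduce via $r=\lfloor r/m\rfloor m+t$ to analyzing $\lfloor t(j+1)/m\rfloor-\lfloor tj/m\rfloor$, and both use $(t,m)=1$ to rule out $t\mid km$ when pinning down the jump locations. The paper handles the indexing you flag as ``fiddly'' by a direct case check: if $j=\lfloor km/t\rfloor$ then $j<km/t<j+1$, hence $tj/m<k<t(j+1)/m$ and the floor jumps; if $\lfloor km/t\rfloor<j<j+1\le\lfloor(k+1)m/t\rfloor$ then $k\le tj/m<t(j+1)/m<k+1$ and it does not. That is exactly your ``largest $j$ still giving value $k-1$'' reading, made explicit.

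For part~(2) your route is genuinely different and more elementary. The paper proves the identity \emph{extrinsically}: it invokes Proposition~\ref{theorem.gamma} for the Kummer extension $\prod_{i=1}^m(y-\alpha_i)=x^t$, counts the Weierstrass gaps of $P_1$ as $\sum_{j=1}^{t-1}\bigl(m-1-\lfloor mj/t\rfloor\bigr)$, and equates this with the genus $(m-1)(t-1)/2$. Your fractional-part argument (the residues $km\bmod t$ permute $1,\dots,t-1$, so $\sum\{km/t\}=(t-1)/2$) is self-contained and avoids appealing to the Riemann--Roch/Weierstrass gap machinery. The paper's approach is thematically pleasant---the lemma feeds back into Kummer-extension computations, so deriving it from one is tidy---but yours is logically lighter and does not require conjuring an auxiliary curve over a large enough field.
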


\begin{proof}
   First, observe that
    $$\left\lfloor\frac{r(j+1)}{m}\right\rfloor-\left\lfloor\frac{rj}{m}\right\rfloor=\left\lfloor\frac{r}{m}\right\rfloor+\left\lfloor\frac{t(j+1)}{m}\right\rfloor-\left\lfloor\frac{tj}{m}\right\rfloor.$$
 Let $\left\lfloor\frac{km}{t}\right\rfloor<j<j+1\leq\left\lfloor\frac{(k+1)m}{t}\right\rfloor$ for some $1\leq k\leq t-1$. Since $(r,m)=1$, $(t,m)=1$ and we can guarantee that
    $$k\leq \frac{tj}{m}<\frac{t(j+1)}{m}<k+1.$$
 Thus, $\left\lfloor\frac{tj}{m}\right\rfloor=\left\lfloor\frac{t(j+1)}{m}\right\rfloor$. On the other hand, if $j=\left\lfloor\frac{km}{t}\right\rfloor$ for some $1\leq k\leq t-1$, then $j+1=\left\lceil\frac{km}{t}\right\rceil$. Since $(t,m)=1$, 
    $$j<\frac{km}{t}<j+1\Longrightarrow \frac{tj}{m}<k<\frac{t(j+1)}{m},$$
 from which $\left\lfloor\frac{t(j+1)}{m}\right\rfloor-\left\lfloor\frac{tj}{m}\right\rfloor=1$. We can conclude the first result by the observation at the beginning of this proof.
    
From Proposition \ref{theorem.gamma}, we know that for the Kummer extension given by $\prod_{i=1}^m (y-\alpha_i)=x^t$, the number of gaps of $P_1$ is
\begin{eqnarray*}
\left|\left\{tk+j\ \left|\ 1\leq j\leq t-1,\ 0\leq k\leq m-2-\left\lfloor\frac{mj}{t}\right\rfloor\right.\right\}\right|
&=& \sum_{j=1}^{t-1}\left(m-1-\left\lfloor\frac{mj}{t}\right\rfloor\right) \\
&=& \frac{(m-1)(t-1)}{2}
\end{eqnarray*}
from which the conclusion follows.
\end{proof}

We are ready to provide explicit effective non-special divisors of degree $g$ over some Kummer extensions. Here, we will make use of particular places of the function field denoted $P_{0b_j}$ where $P_{0b_i} \neq P_{0b_j}$ unless $i=j$.

\begin{theorem}\label{kummer rgtm}
    Let $F/\mathbb{F}_q(y)$ be a Kummer extension given by
    $$\prod_{i=1}^r (y-\alpha_i)=x^m$$
  where $\alpha_i\in\mathbb{F}_q$, $(q,m)=1$, and $(r,m)=1$. For $1\leq j\leq m-1-\left\lfloor\frac{m}{r}\right\rfloor$, define the following values:
    \begin{itemize}
        \item $l_j=r-\left\lfloor\frac{rj}{m}\right\rfloor$.
        \item $s_j=l_j-l_{j+1}$ if $j<m-1-\left\lfloor\frac{m}{r}\right\rfloor$ and $s_{m-1-\left\lfloor\frac{m}{r}\right\rfloor}=l_{m-1-\left\lfloor\frac{m}{r}\right\rfloor}-1=\max\left\{1,\left\lfloor\frac{r}{m}\right\rfloor\right\}$.
    \end{itemize}
    Then a divisor $A$ is effective and non-special of degree $g$ with support contained in the set $\left\{ P_{0b} : \prod_{i=1}^r (b-\alpha_i)=0\right\}$  if and only if
     $$A=\sum_{j=1}^{m-1-\left\lfloor\frac{m}{r}\right\rfloor} j\sum_{i=1}^{s_j}  P_{0b_{j_i}}.$$
     In particular, if $r<m$,
     
    $$A=\sum_{j=1}^{r-1} \left\lfloor\frac{jm}{r}\right\rfloor P_{0b_j}.$$
\end{theorem}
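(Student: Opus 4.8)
The plan is to obtain non-speciality from Proposition~\ref{gns} together with the explicit minimal generating sets of Proposition~\ref{theorem.gamma}, and to prove both directions of the equivalence by a single counting argument on the coefficient vector. Write $P_1,\dots,P_r$ for the places lying over $x=0$, with $P_i$ corresponding to $y=\alpha_i$, put $M:=m-1-\lfloor m/r\rfloor$, and adopt the convention $l_{M+1}:=1$, so that $s_j=l_j-l_{j+1}$ uniformly for $1\le j\le M$. Since the defining equation is symmetric in $\alpha_1,\dots,\alpha_r$, the description of $\Gamma^+(P_{i_1},\dots,P_{i_l})$ in Proposition~\ref{theorem.gamma} holds for \emph{every} $l$-subset of $\{P_1,\dots,P_r\}$. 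The telescoping identity $\sum_{j=1}^{M}s_j=l_1-l_{M+1}=r-\lfloor r/m\rfloor-1$ shows that any divisor of the asserted form is supported on $r-\lfloor r/m\rfloor-1$ points, so every subset to which Proposition~\ref{theorem.gamma} is applied has size at most $r-\lfloor r/m\rfloor$; the degenerate cases forcing $r-\lfloor r/m\rfloor\le 1$ do not arise once $g\ge 1$.

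First I would verify $\deg A=g$. By summation by parts, using $l_{M+1}=1$,
\[
\deg A=\sum_{j=1}^{M} j\,s_j=\sum_{j=1}^{M} j\,(l_j-l_{j+1})=\sum_{j=1}^{M} l_j-M=\sum_{j=1}^{M}\bigl(r-1-\lfloor rj/m\rfloor\bigr).
\]
By Proposition~\ref{theorem.gamma} the gap set of $P_1$ is $\{\,mk+j:1\le j\le M,\ 0\le k\le r-2-\lfloor rj/m\rfloor\,\}$, so the last sum counts the gaps of $P_1$ and equals $g$; here one uses the elementary estimate $\lfloor rj/m\rfloor\le r-2$ for $1\le j\le M$, equivalently $M<m(r-1)/r$, which follows from $(r,m)=1$ as in Lemma~\ref{lemma.aux1}. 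Effectiveness of $A$ is immediate: $l_j$ is non-increasing by Lemma~\ref{lemma.aux1}(1), so each $s_j\ge 0$, and $s_M=l_M-1\ge 1$.

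For the forward implication, let $\alpha$ be the coefficient vector of $A$; by Proposition~\ref{gns} it suffices to show that no nonzero element of the minimal generating set of the support tuple is $\le\alpha$. A single-point generator is a nonzero $h\in H(P_i)$ in coordinate $i$; since $1,2,\dots,M$ are all gaps of $P_i$, the multiplicity of $H(P_i)$ equals $M+1$, which exceeds every coordinate of $\alpha$, so no such generator is $\le\alpha$. A generator from $\Gamma^+(P_{i_1},\dots,P_{i_l})$ with $l\ge 2$ has the form $(ms_1+j,\dots,ms_l+j)$ with $\sum_t s_t=r-l-\lfloor rj/m\rfloor$; if it were $\le\alpha$, then $ms_t+j\le\alpha_{i_t}\le M<m$ for each $t$ would force all $s_t=0$, hence $l=r-\lfloor rj/m\rfloor=l_j$, and the generator would be the constant vector $(j,\dots,j)$ of length $l_j$. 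But $A$ has exactly $\sum_{j'\ge j}s_{j'}=l_j-1$ coordinates $\ge j$, so no constant vector of length $l_j$ can be $\le\alpha$. Hence $A$ is non-special of degree $g$.

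For the converse, let $B=\sum_i\beta_i P_{0b_i}$ be effective of degree $g$ and non-special, so $\ell(B)=1$. If some $\beta_i\ge M+1$, then $M+1\in H(P_i)$ (it is not in the gap set above), so $\mathcal L(B)$ contains a nonconstant function, contradicting $\ell(B)=1$; thus $\beta_i\le M$ for all $i$. Put $n_j:=|\{i:\beta_i\ge j\}|$ for $1\le j\le M$. If $n_j\ge l_j$, then picking $l_j$ points with $\beta_i\ge j$, the exponent sum $r-l_j-\lfloor rj/m\rfloor$ equals $0$, so by Proposition~\ref{theorem.gamma} the constant vector $(j,\dots,j)$ of length $l_j$ lies in their Weierstrass semigroup and is $\le\beta$, again contradicting $\ell(B)=1$; hence $n_j\le l_j-1$ for all $j$. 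Since $\sum_{j=1}^{M}n_j=\deg B=g=\sum_{j=1}^{M}(l_j-1)$, equality holds termwise, $n_j=l_j-1$, so the number of $i$ with $\beta_i=j$ is $n_j-n_{j+1}=s_j$ (with $n_{M+1}:=0$), i.e.\ $B$ has the asserted form. When $r<m$ we have $\lfloor r/m\rfloor=0$, and Lemma~\ref{lemma.aux1}(1) with $t=r$ shows that $s_j=l_j-l_{j+1}$ is $1$ exactly when $j=\lfloor km/r\rfloor$ for some $1\le k\le r-1$ and $0$ otherwise, so relabelling gives $A=\sum_{k=1}^{r-1}\lfloor km/r\rfloor\,P_{0b_k}$. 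I expect the main obstacle to be the combinatorics of the forward step — recognising the constant vectors $(j,\dots,j)$ of length $l_j$ as the only minimal generators that could drop below $\alpha$, and seeing that the $s_j$ are calibrated so that $A$ has precisely $l_j-1$ coordinates $\ge j$ — together with establishing the identity $\sum_{j=1}^{M}(l_j-1)=g$, where the hypothesis $(r,m)=1$ and the floor estimates of Lemma~\ref{lemma.aux1} genuinely enter.
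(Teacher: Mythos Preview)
Your proof is correct and follows the same overall strategy as the paper: apply Proposition~\ref{gns} together with the explicit description of $\Gamma^+$ from Proposition~\ref{theorem.gamma} to establish non-speciality, and use the constant vectors $(j,\dots,j)$ of length $l_j$ to control the coefficient multiset in the converse. Two steps are executed differently, and in both cases your version is arguably cleaner. For the degree, the paper computes $\sum j s_j$ directly via Lemma~\ref{lemma.aux1}(1),(2), splitting into the $\lfloor r/m\rfloor$ part and the residual $t=r\bmod m$ part; your Abel summation rewrites $\sum_{j}j s_j=\sum_j(l_j-1)$ and then recognises this as the gap count of $P_1$, which gives $g$ immediately without the floor identities. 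For the converse, the paper builds auxiliary divisors $D'_h\ge D_h$ by successively padding with spare points of $\supp((x)_0)$ until $\sum_{j\ge h}\deg D'_j=l_h-1$, then compares degrees; your argument with $n_j=|\{i:\beta_i\ge j\}|$ obtains the inequalities $n_j\le l_j-1$ directly from non-speciality and then forces equality from $\sum n_j=\deg B=g=\sum(l_j-1)$, which is the same mechanism stripped to its combinatorial core. The forward non-speciality check and the $r<m$ specialisation match the paper essentially line for line.
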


\begin{proof}

First, note that $$A=\sum_{j=1}^{m-1-\left\lfloor\frac{m}{r}\right\rfloor} jD_j = \sum_{j=1}^{m-1-\left\lfloor\frac{m}{r}\right\rfloor} j  \sum_{i=1}^{s_j} P_{0b_{j_i}} $$
where $$D_j = \begin{cases} \sum_{i=1}^{s_j} P_{0b_{j_i}} & \textnormal{if } s_j > 0 \\
    0 & \textnormal{if } s_j =0 \end{cases}$$ \noindent for $1\leq j\leq m-1-\left\lfloor\frac{m}{r}\right\rfloor$. 

We will prove that $A$ is of degree $g$. Let $t=r\mod m$, we have:
\begin{align*}
    \deg A&=\sum_{i=1}^{m-1-\left\lfloor\frac{m}{r}\right\rfloor} js_j\\
    &=\sum_{i=1}^{m-1}j\left\lfloor\frac{r}{m}\right\rfloor+\sum_{k=1}^{t-1}\left\lfloor\frac{km}{t}\right\rfloor\\
    &=\frac{(m-1)m}{2}\left\lfloor\frac{r}{m}\right\rfloor+\frac{(m-1)(t-1)}{2}\\
    &=\frac{(m-1)}{2}\left(m\left\lfloor\frac{r}{m}\right\rfloor+t-1\right)\\
    &=\frac{(m-1)(r-1)}{2}=g
\end{align*}

\noindent where the second equality follows from Lemma \ref{lemma.aux1} (1) and the third one from Lemma \ref{lemma.aux1} (2). Therefore, $A$ is effective of degree $g$. We will prove now that $A$ is a non-special divisor using Proposition~\ref{gns}.

Take $v\in\mathbb{N}^{l_1-1}$ such that $A=\sum_{i=1}^{l_1-1} v_iP_i$. Since $v_i\leq m-1-\left\lfloor\frac{m}{r}\right\rfloor$ for all $i$, then by Proposition \ref{theorem.gamma} $v_i<w$ for any $w\in \Gamma^+(P_i)$, and so
$$\iota_{\{i\}}(w)\not\leq v$$

\noindent for any $w\in\Gamma^+(P_i)$. Take $w\in\Gamma^+(P_{i_j}\ |\ j\in I\subset[l_1-1])$. If for some $i$, $w_i>m-\left\lfloor\frac{m}{r}\right\rfloor$, then $w\not\leq v$, so assume $w=(k,\ldots,k)$ for some $1\leq k\leq m-1-\left\lfloor\frac{m}{r}\right\rfloor$. By Proposition \ref{theorem.gamma}, we know that $|I|=l_k$ and the number of entries of $v$ greater than or equal to $k$ is $\sum_{i=k}^{m-1-\left\lfloor\frac{m}{r}\right\rfloor} s_i=l_k-1$, therefore $\iota_I(w)\not\leq v$ for any $I$ of cardinality $l_k$. This demonstrates that $w\not\leq v$ for all $w\in\Gamma(\Supp( A))$ and therefore $A$ is non-special.

Set $\gamma=m-1-\left\lfloor\frac{m}{r}\right\rfloor$ and choose $B$ to be an effective non-special divisor of degree $g$ supported on $\Supp((x))$. If $v_{P}(B)\geq\gamma+1$ for some $P$, then $\iota_{i}(\gamma+1)\leq B$, contradicting the non-speciality of $B$.
    
            Write $B=\sum_{j=1}^{\gamma} jD_j$ where $D_j$ is zero or is the sum of distinct rational places of degree $1$ and $\Supp(D_j)\cap\Supp(D_h)=\emptyset$ for $j\neq h$. 
    
    Observe that $|\Supp(B)|\leq l_1-1<r=|\Supp((x)_0)|$. For $D_{\gamma}$ we know
    $$\deg D_{\gamma}\leq l_{\gamma}-1=s_{\gamma}$$
    
    \noindent otherwise it would contradict the non-specialty of $B$. Take $D'_\gamma\geq D_\gamma$, $\Supp(D'_\gamma)\subset\Supp(D_\gamma)\cup(\Supp((x)_0)\setminus\Supp(B))$, such that 
    $$\deg D'_\gamma=|\Supp (D'_\gamma)|=l_\gamma-1.$$
    
    Similarly, for $1\leq h<\gamma$, we know 
    $$\sum_{j=h}^\gamma \deg D_j\leq \deg D_h+\sum_{j=h+1}^\gamma \deg D'_j\leq l_h-1,$$
    
    \noindent so take $D'_h\geq D_h$, $\Supp(D_h)\subseteq\Supp((x)_0)\setminus\Supp (B+\sum_{j=h+1}^\gamma D'_j)$ such that $$\sum_{j=h}^\gamma \deg D'_j=\sum_{j=h}^\gamma |\Supp (D'_j)|=l_h-1.$$ From this construction, it is clear that 
    $$\deg D'_h=s_h$$
    
    \noindent and so
    $$g=\deg B\leq \sum_{j=1}^\gamma j\deg D'_j=\sum_{j=1}^\gamma js_j=g.$$
    
    Then $D'_h=D_h$ for any $1\leq h\leq \gamma$ and $B$ has the desired form.
    
    In the case $r<m$, by Lemma \ref{lemma.aux1} we have $s_j=1$ if $j=\left\lfloor\frac{km}{r}\right\rfloor$ and $0$ otherwise and then $D_j=P_k$ or $D_j=0$. 
\end{proof}

\begin{corollary}
    On the norm-trace function field given by $y^{q^{t-1}}+y^{q^{t-2}}+ \dots + y=x^{\frac{q^{t}-1}{q-1}}$ over $\mathbb{F}_{q^t}$, any effective non-special divisor of degree $g$ supported by places $P_{0b}$ is of the form
   $$\sum_{i \in S} iP_{0b_i}$$ where $S =  \left\{ 1, \dots, \frac{q^t-1}{q-1}-2 \right\} \setminus \left\{ i  :  i \equiv 0 \mod q \right\}$. 
   \end{corollary}

\begin{proof}
    Take $u=\frac{q^t-1}{q-1}$. Given that
    $$\left\lfloor\frac{(t-1)u}{q^{t-1}}\right\rfloor=u-2,$$
    
    \noindent it is enough to prove that $\left\lfloor\frac{jm}{t}\right\rfloor$ cannot be divisible by $q$, since
    $$u-2-|\{i\in[u-2]\ |\ q|i\}|=u-2-\frac{u-1}{q}+1=\frac{u-1}{q}(q-1)=q^{t-1}-1$$
    
    \noindent and then $A$ should be $\sum_{j=1}^{q^{t-1}-1}\left\lfloor\frac{ju}{q^{t-1}}\right\rfloor P_j$, implying that $A$ is non-special of degree $g$. By Theorem \ref{kummer rgtm}, any other divisor with these characteristics should be of this form.
    
  Thus, we will prove $q\not{|}\left\lfloor\frac{ju}{q^{t-1}}\right\rfloor$ for any $1\leq j\leq q^{t-1}-1$. If $j$ is such that $\left\lfloor\frac{ju}{q^{t-1}}\right\rfloor=qk$ for some $1\leq k\leq\frac{u-1}{q}-1$, then
    $$ju=q^tk+z=u(q-1)k+k+z.$$
        This expression implies that $u|k+z$, but
    $$k+z<\frac{u-2}{q}+q^{t-1}=\frac{uq-1}{q}<u.$$ 
    Then, no such $k$ exists, and the conclusion follows.
\end{proof}

\begin{corollary}
On the Hermitian function field $y^q+y=x^{q+1}$ over $\F_{q^2}$, any effective non-special divisor of degree $g$ with support contained in $\{P_{0b_i}\ :\ 1\leq i\leq q\}$ is of the form $A=\sum_{i=1}^{q-1} i P_{0b_i}$. 
\end{corollary}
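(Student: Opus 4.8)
The plan is to derive this corollary directly from Theorem~\ref{kummer rgtm} by specializing the parameters $r$ and $m$ to the Hermitian case. The Hermitian curve $y^q+y=x^{q+1}$ over $\F_{q^2}$ is the Kummer extension with $r=q$ (the degree of the additive polynomial $y^q+y$ in $y$) and $m=q+1$ (the exponent on $x$); indeed $\gcd(q,q+1)=1$, so the hypotheses of Theorem~\ref{kummer rgtm} are met, and the places $P_{0b_i}$ with $b_i$ ranging over the $q$ roots of $y^q+y=0$ are exactly the places lying over $y-\alpha_i$, so they form $\supp((x)_0) \setminus \{P_\infty\}$. The genus is $\frac{(m-1)(r-1)}{2}=\frac{q(q-1)}{2}$, as expected.

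Since $r=q < q+1 = m$, I am in the regime $r<m$ of Theorem~\ref{kummer rgtm}, so the theorem tells me that every effective non-special divisor of degree $g$ supported on $\{P_{0b}\}$ has the form $A=\sum_{j=1}^{r-1}\left\lfloor\frac{jm}{r}\right\rfloor P_{0b_j}=\sum_{j=1}^{q-1}\left\lfloor\frac{j(q+1)}{q}\right\rfloor P_{0b_j}$. The only remaining task is the routine computation $\left\lfloor\frac{j(q+1)}{q}\right\rfloor = \left\lfloor j + \frac{j}{q}\right\rfloor = j + \left\lfloor\frac{j}{q}\right\rfloor = j$ for each $1\leq j\leq q-1$, using that $0 < j/q < 1$ in this range. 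This immediately gives $A=\sum_{j=1}^{q-1} j P_{0b_j}$, which is the claimed form.

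There is essentially no obstacle here: the corollary is a one-line substitution into the $r<m$ clause of Theorem~\ref{kummer rgtm}, and the floor computation is trivial. The only point requiring a sentence of care is the identification of the Hermitian curve as the Kummer extension with $(r,m)=(q,q+1)$ and the matching of the point set $\{P_{0b_i}: 1\leq i\leq q\}$ with the places over the $y-\alpha_i$; once that dictionary is in place, Theorem~\ref{kummer rgtm} does all the work. (One could alternatively rederive it from scratch via Proposition~\ref{theorem.gamma} and Proposition~\ref{gns}, but that merely reproves the theorem in this special case, so I would not do that.)
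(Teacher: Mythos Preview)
Your proposal is correct and matches the paper's intent: the corollary is stated there without proof, as an immediate specialization of Theorem~\ref{kummer rgtm} with $(r,m)=(q,q+1)$, which is exactly what you carry out. The floor computation $\left\lfloor j(q+1)/q\right\rfloor=j$ for $1\leq j\leq q-1$ is the only thing to check, and you handle it correctly.
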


\begin{remark}\rm
    Theorem \ref{kummer rgtm} also gives effective non-special divisors of degree $g$ for function fields of the form $X:y^{q^{t-1}}+\cdots+y=x^u$ over $\mathbb{F}_{q^t}$ where $u|\frac{q^t-1}{q-1}$.\end{remark}

Now that we have explicit constructions for non-special divisors of degree $g$, we use the following idea to obtain non-special divisors of degree $g-1$.  These divisors of degree $g-1$ will support the construction of LCD codes in Section~\ref{codes_euclidean_section}.

\begin{lemma} \cite[Lemma 3]{BB}\label{lemma.BB}
If $A$ is a non-special divisor of degree $g$ on a function field $F$ and there exists $P \in X \left( \F \right) \setminus \Supp (A)$, then $A-P$ is non-special. 
\end{lemma}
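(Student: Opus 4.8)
The plan is to use the Riemann--Roch machinery together with the hypothesis that there is an $\mathbb{F}$-rational point $P$ outside the support of $A$. Since $A$ is non-special of degree $g$, we have $\ell(A) = \deg A + 1 - g = 1$, so $\mathcal{L}(A) = \mathbb{F}$ (the constant functions), using the fact that $A$ being non-special and $A \geq 0$ forces $\mathcal{L}(A)$ to consist of constants only. We want to show $A - P$ is non-special, i.e. $\ell(A-P) = \deg(A-P) + 1 - g = g - 1 + 1 - g = 0$, that is, $\mathcal{L}(A-P) = \{0\}$.

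First I would observe the obvious inclusion $\mathcal{L}(A-P) \subseteq \mathcal{L}(A) = \mathbb{F}$, since $A - P \leq A$. So any $f \in \mathcal{L}(A-P)$ is a constant. But if $f$ is a nonzero constant, then $(f) = 0$, and the condition $(f) \geq -(A-P)$ reads $0 \geq -A + P$, i.e. $A \geq P$, which means $v_P(A) \geq 1$, contradicting $P \notin \supp A$. Hence $\mathcal{L}(A-P) = \{0\}$, so $\ell(A-P) = 0 = \deg(A-P) + 1 - g$, and $A-P$ is non-special. (Equivalently, one can phrase this via canonical divisors: $\ell(A) = \deg A + 1 - g$ gives $\ell(K-A) = 0$; since $K - (A-P) = (K-A) + P \geq K - A$ only goes the wrong way, so the direct computation of $\ell(A-P)$ is cleaner here.)

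Actually, the cleanest phrasing avoids even mentioning the canonical divisor: it suffices to note that $\ell(A-P) \leq \ell(A) = 1$ and $\ell(A-P) \geq \ell(A) - 1 = 0$, with equality $\ell(A-P) = 1$ holding if and only if every function in $\mathcal{L}(A)$ vanishes appropriately — more precisely, $\ell(A-P) = \ell(A)$ iff $P$ is a base point of $|A|$ in the sense that $v_P(g) \geq -v_P(A) + 1$ for all $g \in \mathcal{L}(A)$; since $\mathcal{L}(A) = \mathbb{F}$ and $v_P(A) = 0$, a nonzero constant has $v_P = 0 \not\geq 1$, so $P$ is not a base point and $\ell(A-P) = 0$. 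Then $A - P$ has degree $g-1$ and $\ell(A-P) = 0 = (g-1) + 1 - g$, so it is non-special. This is the whole argument.

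There is no real obstacle here; the only thing to be careful about is the initial step $\mathcal{L}(A) = \mathbb{F}$, which uses both that $A$ is effective (so $\mathbb{F} \subseteq \mathcal{L}(A)$) and that $A$ is non-special of degree exactly $g$ (so $\ell(A) = 1$), a point already emphasized in the discussion preceding Figure~\ref{line}. Everything else is immediate from $A - P \leq A$ together with the fact that a nonzero constant function cannot lie in $\mathcal{L}(A - P)$ when $v_P(A) = 0$.
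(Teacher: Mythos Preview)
The paper does not give its own proof; the lemma is simply quoted from \cite[Lemma~3]{BB}. Your argument is the standard one and is correct once $A$ is assumed effective, a hypothesis you rightly flag as essential to the step $\mathcal L(A)=\F$. In fact the statement as printed here (without ``effective'') is false: on a genus-one curve, if $P,Q,P_\infty$ are distinct rational points with $P+Q\sim 2P_\infty$, then $A=2P_\infty-Q$ is non-special of degree $g=1$ with $P\notin\supp A$, yet $A-P\sim 0$ is principal and hence special. The original \cite[Lemma~3]{BB} does assume $A\geq 0$, and every application in this paper feeds in an effective $A$ produced by Theorem~\ref{kummer rgtm}, so your proof matches both the intended statement and its standard justification.
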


Lemma \ref{lemma.BB} and Theorem \ref{kummer rgtm} yield the following result. 

\begin{theorem} \label{deg_g-1} Let $F/\mathbb{F}_q(y)$ be a Kummer extension defined by 
$$\prod_{i=1}^r(y-\alpha_i)=x^m$$

\noindent with $\alpha_i\in\mathbb{F}_q$, $(q,m)=1$, and $(r,m)=1$. Then
  $$A= \left( \sum_{j=1}^{m-1-\left\lfloor\frac{m}{r}\right\rfloor} j\sum_{i=1}^{s_j}  P_{0b_{j_i}} \right) - P$$
 is a non-special divisor of degree $g-1$ for all $P\in\{P_{ab}\ |\ a\neq 0\text{ or }b\neq b_{j_i}\}\cup\{P_\infty\}$.
In particular, there exist non-special divisors of degree $g-1$ supported on $\Supp((x)_0) \cup \left\{ P_{ab} \right\}$ for any $a \neq 0$. 
\end{theorem}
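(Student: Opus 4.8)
The plan is to combine Theorem~\ref{kummer rgtm} with Lemma~\ref{lemma.BB} directly. Theorem~\ref{kummer rgtm} gives that $A_0 := \sum_{j=1}^{m-1-\left\lfloor\frac{m}{r}\right\rfloor} j\sum_{i=1}^{s_j} P_{0b_{j_i}}$ is an effective non-special divisor of degree $g$ with $\supp A_0 \subseteq \{P_{0b} : \prod_{i=1}^r(b-\alpha_i)=0\}$; in particular every point in $\supp A_0$ is of the form $P_{0b_{j_i}}$, i.e. has $x$-coordinate $0$ and $y$-coordinate one of the roots $b_{j_i}$. So the first step is simply to identify $\supp A_0$ explicitly as a subset of $\{P_{0b_{j_i}}\}$.

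The second step is to invoke Lemma~\ref{lemma.BB}: if $P \in X(\F) \setminus \supp A_0$, then $A_0 - P$ is non-special, and clearly $\deg(A_0 - P) = g - 1$. Thus it remains only to verify that each $P$ in the claimed set lies outside $\supp A_0$. For $P = P_\infty$ this is immediate since $\infty$ is not among the points with $x$-coordinate $0$. For $P = P_{ab}$ with $a \neq 0$, again the $x$-coordinate is nonzero so $P \notin \supp A_0$. For $P = P_{ab}$ with $a = 0$ but $b \neq b_{j_i}$ for all the $b_{j_i}$ appearing in $A_0$, the point is not in $\supp A_0$ by construction. Hence in every listed case $A_0 - P$ is a non-special divisor of degree $g-1$, giving $A$ as stated.

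The final sentence ("In particular, there exist non-special divisors of degree $g-1$ supported on $\supp((x)_0) \cup \{P_{ab}\}$ for any $a \neq 0$") is then an immediate specialization: take $P = P_{ab}$ with $a \neq 0$, which certainly lies in $X(\F)$ provided such a rational point exists (which it does for these curves, e.g. any affine point with nonzero first coordinate), and note that $\supp(A_0 - P) = \supp A_0 \cup \{P_{ab}\} \subseteq \supp((x)_0) \cup \{P_{ab}\}$ since $\supp A_0 \subseteq \supp((x)_0)$. One should perhaps remark that $v_P(A_0 - P) = -1 < 0$, so $A_0 - P$ is genuinely supported by $P$ in the sense of the paper's conventions.

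There is essentially no obstacle here: the theorem is a corollary of two already-established results, and the only content is the bookkeeping that the excluded points $P$ really do avoid $\supp A_0$. The one point requiring a word of care is ensuring the relevant rational points actually exist on $X$ — but this is guaranteed since the Kummer extensions under consideration have the requisite rational points (the $P_{0b_{j_i}}$ themselves are rational, and points $P_{ab}$ with $a \neq 0$ exist because the curve has many rational points), so Lemma~\ref{lemma.BB} applies. I would therefore keep the proof to a few lines, citing Theorem~\ref{kummer rgtm} and Lemma~\ref{lemma.BB} and checking the support condition case by case.
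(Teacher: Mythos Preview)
Your proposal is correct and follows essentially the same approach as the paper: both arguments simply combine Theorem~\ref{kummer rgtm} (giving the effective non-special divisor $A_0$ of degree $g$) with Lemma~\ref{lemma.BB} (subtracting a rational point outside $\supp A_0$), and then check that the specified points $P$ lie outside $\supp A_0$. Your case-by-case verification of the support condition is in fact more explicit than the paper's own proof, which handles this implicitly.
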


\begin{proof}
Note that $A+P_{ab}$ is non-special of degree $g$ by Theorem \ref{kummer rgtm}.
We have $A$ non-special too by Lemma \ref{lemma.BB}.
Given
$$|\Supp(A)|=r-\left\lfloor\frac{r}{m}\right\rfloor-1\leq r-1,$$
we can take $P\in\Supp((x))\setminus\Supp(A)\neq\emptyset$.
\end{proof}

\section{Hulls of codes from Kummer extensions} \label{codes_euclidean_section}

Kummer extensions include several families of well-known maximal function fields, such as the Hermitian function field. Their structure facilitates building codes with desirable properties. Self-orthogonal codes have been investigated in \cite{MTT} for the one-point codes  $C(D,mP_\infty)$. In this section, by employing the results from Section \ref{explicit_divisors_section}, we see how to get algebraic geometry codes whose hulls are algebraic geometry codes from Kummer extensions.

The general idea can be illustrated by using the non-special divisors of degree $g-1$ to get LCD codes. We consider a non-special divisor $A$ of degree $g-1$ and a divisor $B\geq A$ such that $A+B-D=(\eta)$ is canonical, with simple poles at $D$ and $\mathrm{res}_P(\eta)=1$ for $P\in\Supp(D)$. With this, we would have that any two divisors $G,H$ of degree less than $\deg D$, such that $\gcd(G,H)=A$ and $\mathrm{lcm}(G,H)=B$, satisfy
$$C(D,G)^\perp=C(D,H)\ \ \ \text{ and }\ \ \ C(D,G)\cap C(D,H)=\{0\},$$ meaning that $C(D,G)$ is LCD.

\begin{theorem}\label{maxcur}
    Let $F/\mathbb{F}_{q^2}(y)$ be a maximal Kummer extension of genus $g$ defined by 
       $\prod_{i=1}^r (y-\alpha_i)=x^{q+1}$
  where $\alpha_i\in\mathbb{F}_q$ and $(r,q+1)=1$. Take 
  $D=(x^{q^2-1}-1)_0$, meaning the sum of all places $P_{ab}$ with $a,b\in\mathbb{F}_{q^2}\setminus\{0\}$ satisfying $\prod_{i=1}^r (b-\alpha_i)=a^{q+1}$,
  and $A$ to be a non-special divisor of degree $g-1$ such that $\Supp(A)\subset\mathrm{Supp}((x)_0)$. Set $P_\infty$ to be the only pole of $F$. If $G,H$ are divisors such that $2g-2<\deg G, \deg H<\deg D$ and 
    $$\gcd(G,H)= A,$$
    $$G+H=(2g+r-2)P_\infty+\sum_{i=1}^r (q^2-2)P_{0b_i},$$ then $C(D,G)$ is LCD.
\end{theorem}

\begin{proof}
	We have $(x^{q^2-2})=(q^2-2)\sum_{i=1}^r P_{0b_i}-r(q^2-2)P_\infty$ and $(x^{q^2-1}-1)=D-r(q ^2-1)P_\infty$. Then
	$$B=\lmd(G,H)=(2g+r(q^2-1)-2)P_\infty+(x^{q^2-2})-A.$$This implies $$A+B-D=(2g-2)P_\infty-(x^{q^2-1}-1)+(x^{q^2-2}).$$
From \cite[Proposition 3]{MTT}, we know that $(dx)=(2g-2)P_\infty$, then, if $z=x^{q^2-1}-1$, we have
 $$A+B-D=(dz/z)$$which is a canonical divisor with simple poles at $D$ and residues $-1$ at any point of $D$.

    \end{proof}

\begin{remark}\rm
    Theorem \ref{maxcur} also holds for Kummer extensions defined by $Q(y)=x^m$, where $Q$ is a linearized polynomial of degree $r$, $q+1\leq m\leq \frac{q^2}{2}-(2,q)+1$ and the number of rational places of the extension is $q^2r+1$. These function fields are optimal because they attain the Lewittes bound and are maximal if and only if $m=q+1$. The norm-trace function field defined by $y^{q^{t-1}}+\cdots+y^q+y=x^\frac{q^t-1}{q-1}$ over $\mathbb{F}_{q^t}$ is such an extension, and it is maximal if and only if $t=2$. 
\end{remark}

\begin{remark}
 	Let $G=A$ and $H=(2g+r-2)P_\infty+\sum_{i=1}^r (q^2-2)P_{0b_i}-A$. By Theorem \ref{kummer rgtm}, we know that for any $R\in\Supp(A)$, $$v_R(A)\leq q-\left\lfloor\frac{q+1}{r}\right\rfloor\leq q-1.$$
        Then $v_{P_{0b_i}}(2A)\leq 2q-2\leq q^2-2$ for any $1\leq i\leq r$ and so
    $$G\leq H.$$Thus, the set of pairs $(G,H)$ satisfying the conditions of Theorem \ref{maxcur} is non-empty. Indeed, $(A,(2g+r-2)P_\infty+\sum_{i=1}^r (q^2-2)P_{0b_i}-A)$ is one such pair, though the code $C(D,A)$ is trivial. The challenge now is to find divisors satisfying these conditions to produce nontrivial codes.
\end{remark}

\medskip

Now, let us describe some LCD codes for specific function fields.
\begin{proposition}\label{function field1}
    Consider the function field $\mathbb{F}_{q^2}(x,y)/\mathbb{F}_{q^2}$ where $\sum_{i=1}^t y^{\frac{q}{2^i}}=x^{q+1}$, $q=2^t$, and $t \geq 2$.
    Let $G=(q^2-1)P_{0b_{\frac{q}{2}}}+\sum_{j=1}^{\frac{q}{2}-1}2jP_{0b_j}$
where $(x)=\sum_{i=1}^{\frac{q}{2}} P_{0b_i}-\frac{q}{2}P_\infty$, and
$D$ be as in Theorem \ref{maxcur}. Then, $C(D, G)$ is an LCD code of dimension $q^2$. 
\end{proposition}

\begin{proof}
    Take $A=\sum_{j=1}^{\frac{q}{2}-1} 2jP_{0b_j}$ and $n=\deg D=\frac{q}{2}(q^2-1)$. By Theorem \ref{kummer rgtm}, $A$ is non-special of degree $g=\frac{(q-2)q}{2}$, the genus of the function field. Take $$H=\frac{q^2-q-4}{2}P_\infty+\sum_{i=1}^{\frac{q}{2}-1} (q^2-2-2j)P_{0b_i}-P_{0b_\frac{q}{2}}.$$ Then
    $$\gcd(G,H)=A-P_{0b_\frac{q}{2}}$$
and    $$\lmd(G,H)=\frac{q^2-q-4}{2}P_\infty+\sum_{i=1}^\frac{q}{2}(q^2-2)P_{0b_i}-A+P_{0b_\frac{q}{2}}.$$
    
    By Theorem \ref{maxcur}, $C(D, G)$ is LCD. Since $A\leq G$ and $A$ is non-special, $G$ is non-special. Given that $\deg G=g+q^2-1<n$, we have
    $$\dim C(D,G)=\ell(G)=\deg G-g+1=q^2$$ and the result follows.
    \end{proof}

\begin{proposition}\label{function field2}
    Consider the function field $\mathbb{F}_{q^{2r}}(x,y)/\mathbb{F}_{q^{2r}}$ where $y^q+y=x^{q^r+1}$, $q$ is a power of a prime, and $r$ is odd. Let 
    $$G=(q^r+1)(q-1)P_\infty+\sum_{i=1}^{q-1}q^{r-1}jP_{0b_j},$$ where  $(x)=\sum_{i=1}^q P_{0b_i}-qP_\infty$, and take 
$D$ as in Theorem \ref{maxcur}. Then $C(D,G)$ is an LCD code of dimension $(q^r+1)(q-1)+1$. 
\end{proposition}

\begin{proof}
    By Theorem \ref{kummer rgtm}, $A=\sum_{i=1}^{q-1}q^{r-1}jP_{0b_j}$  is non-special of degree $g=\frac{q^r(q-1)}{2}$. Take $H=\sum_{i=1}^q (q^{2r}-2)P_{0b_i}-A-P_\infty$. We have $\gcd(G,H)=A-P_\infty$ and
    $$G+H=((q^r+1)(q-1)-1)P_\infty+\sum_{i=1}^{q}(q^{2r}-2)P_{0b_i}.$$
    
    Then, by Theorem \ref{maxcur}, $C(D, G)$ is LCD. Since $\deg G=g+(q^r+1)(q-1)<\deg D=q(q^{2r}-1)$ and $G$ is non-special, $\dim C(D,G)=\ell(G)=(q^r+1)(q-1)+1$.
\end{proof}


The following are examples of LCD codes over maximal function fields.

\begin{example}\rm
Take $q=4$ and consider the function field $F/\mathbb{F}_{q^2}$, where $y^2+y=x^5$. Then $F$ has $33$ rational places and genus $g=2$. Take $P_\infty$ to be the place at infinity and $D=\sum_{a\in\mathbb{F}_{16}\setminus\{0\}} P_{ab}$. We have that $\deg D=30$. For $G=4P_\infty+12P_{00}-P_{01}$, we have that $C(D, G)$ is of dimension $14$, minimum distance $15$, and it is generated by the evaluations of the following functions
$$\mathcal B:=\left\{x,x^2,\frac{x}{y},\frac{x^2}{y},\frac{x^3}{y},\frac{x^4}{y},\frac{x}{y^2},\frac{x^2}{y^2},\frac{x^3}{y^2},\frac{x^4}{y^2},\frac{x^5}{y^2},\frac{x^3}{y^3},\frac{x^4}{y^3},\frac{x^5}{y^3}\right\}.$$
   
On the other hand, for $H=2P_{00}+15P_{01}$, $C(D,H)$ is of dimension $16$, minimum distance $13$, and generated by the evaluation functions
$$ \left\{   \begin{array}{l}
1, \frac{1}{x^2}, \frac{1}{x}, \frac{1}{x^2(y+1)}, \frac{1}{x(y+1)}, \frac{1}{y+1}, \frac{x}{y+1}, \frac{x^2}{y+1},\frac{1}{x^2(y+1)^2},\\
\frac{1}{x(y+1)^2},\frac{1}{(y+1)^2},\frac{x}{(y+1)^2},\frac{x^2}{(y+1)^2},\frac{1}{(y+1)^3},\frac{x}{(y+1)^3},\frac{x^2}{(y+1)^3}
\end{array}\right\}.$$
    
By Proposition \ref{function field1}, $C(D,G)^\perp=C(D,H)$ and $C(D,G)+C(D,H)=\mathbb{F}_{16}^{30}$.
\end{example}

\begin{example}\rm
Let $q=2$ and consider the Hermitian function field defined by $y^2+y=x^3$ over $\mathbb{F}_4$. This function field has 9 rational places and genus $g=1$. Take $P_\infty$ the place at infinity and $D=\sum_{a\in\mathbb{F}_4\setminus\{0\}} P_{ab}$. We have $\deg D=6$ and for $G=3P_\infty+P_{00}$, $C(D,G)$ is of dimension 4 and minimum distance 2. Its generator matrix is
    $$\begin{array}{l|cccccc}
    &(a,a)&(a^2,a)&(1,a)&(a,a^2)&(a^2,a^2)&(1,a^2)\\\hline
    \frac{x^2}{y}&a&1&a^2&1&a^2&a\\
    y&a&a&a&a^2&a^2&a^2\\
    x&a&a^2&1&a&a^2&1\\
    1&1&1&1&1&1&1
    \end{array}$$
    
    Now, for $H=P_{00}+2P_{01}-P_\infty$, $C(D, H)$ is of dimension 2, minimum distance 4, and its generator matrix is
    $$\begin{array}{l|cccccc}
    &(a,a)&(a^2,a)&(1,a)&(a,a^2)&(a^2,a^2)&(1,a^2)\\\hline
    \frac{1}{x}&a^2&a&1&a^2&a&1\\
    \frac{x}{y+1}&a^2&1&a&1&a&a^2
    \end{array}.$$
We can easily check that $C(D,G)^\perp=C(D,H)$ and $C(D,G)\cap C(D,H)=\{0\}$. These codes are near MDS; the sum of their minimum distances equals their length.
\end{example}

Carlet, Mesnager, Tang, Qi, and Pellikaan proved that for $q>3$, any code is isometric to an LCD code, and finding the isometry is relatively simple since it just depends on the generator matrices of the code and some linear algebra \cite{linear}. Even more, Luo et al. used isometries of codes to reduce the hull of a code~\cite[Theorem 7]{Grassl22}. Their result is easily extended to the Euclidean hull. Here, we are not considering how to modify codes to get specific hulls, such as LCD, but rather how to build algebraic geometry codes whose hulls can be fully characterized as algebraic geometry codes using only rational points. The following proposition illustrates how the same techniques we used to describe LCD codes can be used to get the same result for bigger hulls.

\begin{corollary}
Let $G,H$ and $A$ as in Theorem \ref{maxcur}, but with $\gcd(G,H)\geq A$. Then 
$$C(D,G)\cap C(D,H)=C(D,\gcd(G,H))$$
and the dimension of $C(D,\gcd(G,H))$ is $\deg(\gcd(G,H))-g+1$.
\end{corollary}

As an example, we have codes over the Hermitian curve.

\begin{theorem}
    Consider the Hermitian function field $F$ given by $y^q+y=x^{q+1}$ over $\mathbb{F}_{q^2}$. Let
    $$G=\sum_{j=1}^{q}\alpha_jP_{0bj}+\alpha_\infty P_{\infty},$$
      $(x)=\sum_{i=1}^q P_{0b_i}-qP_\infty$, 
 and $D$ be the sum of places $P_{ab}$ on $F$ where $a,b\in\mathbb{F}_{q^2}\setminus{0}$, $b^q+b=a^{q+1}$, $j\leq\alpha_j<q^2-j-2$ for $1\leq j\leq q-1$, and $0\leq\alpha_q\leq q^2-2$ and $-1\leq\alpha_\infty\leq q^2-1$. Let $c_j=\min\{\alpha_j,q^2-\alpha_j-2\}$. Then
 $$\mathrm{Hull}(C(D,G))=C\left(D,\sum_{j=1}^q c_jP_{0b_j}+c_\infty P_\infty\right)$$

 \noindent which is a code of dimension $\sum_{i=1}^{q}c_j+c_\infty-\frac{q(q-1)}{2}+1$.
\end{theorem}

\begin{proof}
    Let $H=\sum_{i=1}^q(q^2-2)P_{0b_i}+(q^2-2)P_\infty-G$. Since $j\leq\alpha_j<q^2-2-j$ for $1\leq j\leq q-1$, then $\mathrm{gcd}(G,H)\geq \sum_{j=1}^{q-1} j P_{0b_j}+c_\infty P_\infty$, which is a non-special divisor of degree $g+c_\infty\geq g-1$ by Theorem \ref{kummer rgtm}. We have the conclusion since $G+H-D$ is a canonical divisor, as proved in the proof of Theorem \ref{maxcur}. 
\end{proof}

\begin{example}
    Some extreme cases are the following.

    \begin{itemize}
        \item If $\alpha_j=j$ for $1\leq j\leq q-1$ and $\alpha_\infty,\alpha_q\in\{0,q^2-2\}$, then $C(D,G)$ is a LCD code.

        \item If $\alpha_j=j$ for $1\leq j\leq q-1$ and $\alpha_\infty=q^2-1$, then $C(D,G+P)$ is a one-dimensional-hull code for any rational $P$.

        \item If $\alpha_j\leq q^2-2-\alpha_j$ for any $j\in\{1,\ldots,q\}\cup\{\infty\}$, then $C(D,G)$ is self-orthogonal.
    \end{itemize}
\end{example}

\begin{example}
Let $q=2$ and consider the Hermitian function field defined by $y^2+y=x^3$ over $\mathbb{F}_4$ as before. Take $D=\sum_{a\in\mathbb{F}_4\setminus\{0\}} P_{ab}$,  $G=2P_\infty+P_{01}+P_{00}$ and $H=P_{00}+P_{01}$. We have $\gcd(G,H)=P_{00}+P_{01}\geq P_{00}-P_\infty$ and then $\gcd(G,H)$ is non-special of degree $2$. Therefore, $\dim C(D,G)\cap C(D,H)=2$ and the minimum distance of $C(D, G')$ is 2. 
\end{example}

To conclude this study of hulls of algebraic geometry codes from Kummer extensions, let us finish by describing how to modify one of the divisors described in Theorem \ref{maxcur} to get more divisors that still yield codes with algebraic geometry hull. 

\begin{theorem}\label{22.12.04}
Let $G$ and $ H$ be divisors as in Theorem \ref{maxcur}. Let $S$ be a degree-zero divisor such that $v_P(S)=0$ if $v_P(G)=v_P(H)$ or 
$$0\leq\frac{v_P(S)}{v_P(H)-v_P(G)}\leq 1.$$
Then, $\mathrm{Hull}(C(D,G+S))=C(D,\gcd(G+S,H-S))$ and $C(D,G+S)$ has the same dimension as $C(D,G)$.
\end{theorem}

\begin{proof}
We have from the hypothesis that
$$|v_P(G)-v_P(H)+2v_P(S)|\leq|v_P(G)-v_P(H)|.$$
Therefore
	$$\gcd(G,H)\leq\gcd(G+S,H-S).$$
Since $\gcd(G, H)$ is non-special,  $\gcd(G+S, H-S)$ is also non-special. We have the equivalence of hull of $C(D, G)$ by Theorem \ref{thm_2_10_25}. Finally, since $G+S$ is non-special and $\deg S=0$, we have that $\dim C(D,G)=\dim C(D,G+S)$. 
\end{proof}

\begin{remark}\rm
    Not all zero divisors are principal. Thus, Theorem \ref{22.12.04} can describe different modifications than those described by Carlet and Guilley \cite{CG} and Luo, Ezerman,  Grassl, and Ling \cite{Grassl22}. Even more, our theorem describes a way to increase the hull given an LCD code, which is not guaranteed by these previous works.
\end{remark}

\section{Conclusion} \label{conclusion}

In this paper, we determined explicit non-special divisors of degree $g-1$ and $g$ on certain Kummer extensions. Consequently, we obtained the expressions for non-special divisors of the smallest degree and effective non-special divisors of the least degree on families of Hermitian and norm-trace function fields and some of their quotients. These divisors support the construction of algebraic geometry codes with algebraic geometry hulls on some families of maximal function fields (and some close relatives) using only rational places, which are relevant to coding theory. These results provide the ability to construct algebraic geometry codes with hulls of specified dimensions, including LCD codes.


\begin{thebibliography}{99}

\bibitem{Assmus}
E. Assmus and J. Key, Affine and projective planes, Discrete Math.,
83 no. 2 (1990), pp. 161--187, 1990.
%

\bibitem{BB}
S. Ballet and D. Le Brigand, On the existence of non-special divisors of degree $g$ and $g - 1$ in algebraic function fields over $\F_q$, J. Number Theory, 116 (2006), pp. 293--310. 
%

\bibitem{graph_isom}
M. Bardet, A. Otmani, and M. Saeed-Taha, Permutation code equivalence is not
harder than graph isomorphism when hulls are trivial, In 2019, IEEE International
Symposium on Information Theory (ISIT), pp. 2464--2468, 2019.
%


\bibitem{Carlet}
C. Carlet, Correlation-Immune Boolean Functions for Leakage Squeezing and Rotating S-Box Masking against Side Channel Attacks, In Gierlichs, B., Guilley, S., Mukhopadhyay, D. (eds) Security, Privacy, and Applied Cryptography Engineering. SPACE 2013. Lecture Notes in Computer Science, 8204. Springer, Berlin, Heidelberg, pp. 70--74, 2013. 

\bibitem{CG}
C. Carlet and S. Guilley, Complementary dual codes for counter-measures to side-channel attacks, Coding Theory and Applications, Springer, Cham, pp. 97--105, 2015.
%

\bibitem{linear} 
C. Carlet, S. Mesnager, C. Tang, Y. Qi and R. Pellikaan, Linear Codes Over $\mathbb F_q$ Are Equivalent to LCD Codes for $q>3$,  IEEE Trans. Inform. Theory, 64 no. 4 (2018), pp. 3010--3017.
%

\bibitem{oneandtwo}
A. S. Castellanos, A. M. Masuda and L. Quoos, One- and Two-point Codes Over Kummer Extensions, IEEE Trans. Inform. Theory, 62 no. 9 (2016), pp. 4867--4872.


\bibitem{hull_quantum}
K. Guenda, S. Jitman, and T. A. Gulliver, Constructions of good entanglement-
assisted quantum error correcting codes, Des. Codes and Cryptogr., 86 no. 1 (2018), pp. 121--136.
%

\bibitem{Grassl22} G. Luo, M. F. Ezerman, N, Grassl, \& S. Ling, How Much Entanglement Does a Quantum Code Need?, arXiv preprint version 2 (2022), arXiv:2207.05647.
%

\bibitem{Massey} J. L. Massey, Linear codes with complementary duals, Discrete Math.,106--107 (1992), pp. 337--342. 
%

\bibitem{Matthews}
G. L. Matthews, The Weierstrass semigroup of an $m$-tuple of collinear places on a Hermitian function field, in Finite Fields and Applications, Lecture Notes in Comput. Sci., 2498 (2004), pp. 12--24.
%

\bibitem{MTQ}
S. Mesnager, C. Tang, and Y. Qi, Complementary dual algebraic geometry codes,  IEEE Trans. Inform. Theory, 64 no. 4 (2018), pp. 2390--2397.
%

\bibitem{MTT}
C. Munuera, W. Ten\'orio, and F. Torres, Quantum error-correcting codes from algebraic geometry codes of Castle type, Quantum Inf. Process., 15 no. 10 (2016), pp. 4071--4088.
%



\bibitem{Adv}
Niederreiter, H., \& \"Ozbudak, F. (2008). Asymptotically good codes. In \textit{Advances in Algebraic Geometry Codes} (pp. 181-220).


\bibitem{NXL}
Niederreiter, H., Xing, C., \& Lam, K. Y. (1999). A new construction of algebraic geometry codes. Applicable Algebra in Engineering, Communication and Computing, 9, 373-381.
%



\bibitem{Peachey}
J. Peachey, Bases and applications of Riemann-Roch spaces of function fields with many rational places, [Doctoral dissertation], Clemson University, 2011.
%

\bibitem{tower}
Pellikaan, R., Stichtenoth, H., \& Torres, F. (1998). Appeared in: Finite Fields and their Applications, vol. 4, pp. 381-392, 1998. Weierstrass semigroups in an asymptotically good tower. Finite fields and their applications, 4, 381-392.
%

\bibitem{Schmidt}
F. K. Schmidt, Zur arithmetischen Theorie der algebraischen Funktionen. II. Allgemeine Theorie der Weierstra punkte, Math. Z., 45 (1939), pp.  75--96. 
%

\bibitem{Sendrier}
N. Sendrier, Finding the permutation between equivalent linear codes: the support
splitting algorithm, IEEE Trans. Inform. Theory, 46 no. 4 (2000), no. 1193--1203.
%

\bibitem{St}
H. Stichtenoth,  Algebraic Function Fields and Codes,
Springer-Verlag, 1993.

\bibitem{kummer} S. Yang and C. Hu, Weierstrass semigroups from Kummer extensions, Finite Fields and Their Appl., 45 (2017), pp. 264--284.


\bibitem{vL}
J. H. van Lint and G. van der Geer, Introduction to Coding Theory and Algebraic Geometry, 1988.
%

\end{thebibliography}
\end{document}